\providecommand{\U}[1]{\protect\rule{.1in}{.1in}}
\newtheorem{lemma}{Lemma}
\newtheorem{proposition}{Proposition}
\newtheorem{remark}{Remark}
\newenvironment{proof}[1][Proof]{\textbf{#1.} }{\ \rule{0.5em}{0.5em}}
\begin{document}
\title{Stratifying on Treatment Status}
\author{Jinyong Hahn\thanks{Department of Economics, UCLA, Los Angeles, CA 90095, USA. Email: hahn@econ.ucla.edu.}\\UCLA
\and John Ham\thanks{Department of Economics, NYU Abu Dhabi, Saadiyat Campus, P.O. Box 129188, Abu Dhabi, United Arab Emirates. Email: jch18@nyu.edu.}\\NYU Abu Dhabi
\and Geert Ridder\thanks{Department of Economics, USC, Los Angeles, CA 90089, USA. Email: ridder@usc.edu.}\\USC
\and Shuyang Sheng\thanks {Shenzhen Finance Institute, School of Management and Economics, The Chinese University of Hong Kong, Shenzhen, 2001 Longxiang Boulevard, Longgang District, Shenzhen, China. Email: shengshuyang@cuhk.edu.cn.}\\CUHK-Shenzhen}
\date{\today}
\maketitle

\begin{abstract} 
We study the estimation of treatment effects using samples stratified by treatment status. Standard estimators of the average treatment effect and the local average treatment effect are inconsistent in this setting. We propose consistent estimators and characterize their asymptotic distributions.  \\
\noindent\textbf{JEL Codes}: C21, C83, C14. \\
\noindent\textbf{Keywords}: Stratification, Treatment Effects, Asymptotic Variance.

\end{abstract}

\newpage

\section{Introduction}

Sample surveys are usually not simple random samples. Often the population is
partitioned into strata, where the subpopulation in each stratum is sampled
with equal probability, but the inclusion probability is unequal between
strata. Stratified sampling is preferred for several reasons. First, it can
result in more precise estimates 
\citep{Domencich_McFadden_1975}. Second, it may over-represent groups of
particular interest. Third, it may arise naturally from the combination of
independent data sets into a single sample \citep{Ridder_Moffitt_2007}.

Many studies select strata based on treatment status. For example,
\citet{Azoulay_2010} analyze the impact of collaborator quality on research
productivity. They compare researchers whose superstar collaborators died with
those whose superstar collaborators did not. Their dataset includes all researchers with a deceased superstar co-author and
a random sample of researchers with a living one. Similarly, \citet{Ham_Khan_2023} assess the effectiveness of a new teaching
approach, JAAGO, against government and non-governmental schools in urban
Bangladesh. The new approach is implemented in only two schools in Dhaka, so
they use the population of JAAGO students alongside a random sample of
students from other schools.

The classical literature showed in parametric settings that stratification on
endogenous variables leads to biased estimates,\footnote{Stratification on
endogenous variables was first discussed in discrete choice models under the
name choice-based sampling \citep{Manski_Lerman_1977,Manski_McFadden_1981,Hausman_Wise_1981}.} whereas
stratification on exogenous variables typically does not result in a bias. We
show that stratification on unconfounded treatment results in a biased estimate
of the average treatment effect (ATE). This may seem to contradict the results in
\cite{Heckman_Todd_2009}, who argued that \textquotedblleft matching and
selection procedures can still be applied when the propensity score is
estimated on unweighted choice based samples\textquotedblright.

The seeming contradiction is resolved by observing that although the conditional average treatment effect (CATE) given confounders $X$ can be identified (from the conditional distributions
of outcome $Y$ given $X$ and treatment status), the \textit{average} of the CATE over the distribution of $X$ (or that of the propensity
score) in a stratified sample results in inconsistent estimates of the ATE. There is a simple fix to the problem, and a reweighted method can identify the ATE, provided that the population
fraction treated is known or can be estimated. In contrast, the average
treatment effect on the treated (ATT) is identified by conventional methods
despite stratification, even if the population fraction treated is unknown.

In addition, we consider the case of an endogenous treatment with a valid
binary instrument. We show that the standard Wald ratio estimator is not
consistent for the local average treatment effect (LATE) when the sample is
stratified on the treatment status. We propose a reweighted method to recover
the LATE, which requires knowledge of the fraction treated in the population.

We contribute to the literature by developing \textit{constructive}
identification results that can form a basis for consistent 
estimation. In particular, we present an analysis of the asymptotic distribution of various estimators under
stratified sampling, which is new to the
literature.\footnote{\citet{Correa_Tian_Bareinboim_2019} and \citet{Nabi2020}
provide generic identification results on the joint distribution of outcomes
and covariates for the treated and controls, but they do not have constructive
identification methods for estimating parameters of interest such as the ATE
or ATT. \citet{Hu2018} and \citet{Zhang2019} propose a reweighted method that is
the same as ours, but only for the ATE using covariate conditioning. We extend
this method to include other parameters of interest, such as the ATT and
LATE. \citet{Song2022} discuss the identification and estimation of ATE/ATT under unconfoundedness in the scenario where sampling is stratified based on the treatment and other discrete covariates. They focus on the case where the treatment probability in the population is known up to an interval. In contrast, we establish asymptotic results under the assumption that this probability is known. We also provide results for LATE, which is not covered in their paper.} 

\section{Unconfounded Treatment\label{sect.exog}}

We start with the case where the treatment $D$ is independent of potential outcomes $(Y_{0},Y_{1})$ given $X$.
Let $Y\equiv DY_{1}+(1-D)Y_{0}$ denote the observed outcome. Denote the
fraction $D=1$ in the sample by $\pi^{\ast}$ and in the population by $\pi$. In a stratified sample, we have $\pi^{\ast}\neq\pi$. 

For $d=0,1$, let $h(y|x,D=d)$ and $h^{\ast}(y|x,D=d)$ denote the conditional
densities of $Y$ given $X=x$ and $D=d$ in the population and sample,
respectively. Similarly, let $g(x|D=d)$ and $g^{\ast}(x|D=d)$ denote the
conditional densities of $X$ given $D=d$ in the population and sample. Because we sample randomly in the strata, the conditional distribution of $(Y,X)$ given $D$ is identical in the population and sample. Therefore for
$d=0,1$,
\begin{align}
h(y|x,D=d)  &  =h^{\ast}(y|x,D=d),\label{Ydensity}\\
g(x|D=d)  &  =g^{\ast}(x|D=d). \label{Xdensity}%
\end{align}

Sampling objects that do not condition on the treatment status, however, may
differ from the population counterparts. To understand their relationship, let
$\pi(x)\equiv\Pr(D=1|X=x)$ denote the propensity score in the population, and
$\pi^{\ast}(x)\equiv\Pr^{\ast}(D=1|X=x)$ the propensity score in the
stratified sample. Moreover, let $g(x)$ and $g^{\ast}(x)$ denote the densities
of $X$ in the population and stratified sample, respectively. By Bayes'
theorem
\begin{equation}
g^{\ast}(x|D=1)=g^{\ast}(x)\frac{\pi^{\ast}(x)}{\pi^{\ast}},\qquad g^{\ast
}(x|D=0)=g^{\ast}(x)\frac{1-\pi^{\ast}(x)}{1-\pi^{\ast}}, \label{g*Bayes}%
\end{equation}
and similar equations hold for the population counterparts. Therefore, we can
derive
\begin{equation}
\pi(x)=\frac{g(x|D=1)\pi}{g(x|D=1)\pi+g(x|D=0)(1-\pi)}=\frac{g^{\ast
}(x|D=1)\pi}{g^{\ast}(x|D=1)\pi+g^{\ast}(x|D=0)(1-\pi)}, \label{pscore}%
\end{equation}
from which we obtain
\begin{equation}
\pi(x)=\frac{\pi^{\ast}(x)\frac{\pi}{\pi^{\ast}}}{\pi^{\ast}(x)\frac{\pi}%
{\pi^{\ast}}+(1-\pi^{\ast}(x))\frac{1-\pi}{1-\pi^{\ast}}}. \label{pi}%
\end{equation}
Because $\pi^{\ast}(x)$ and $\pi^{\ast}$ are identified from the sampling
distribution of $(Y,D,X)$, the population propensity score $\pi(x)$ is
identified \textit{if and only }$\pi$ is known.\ Switching the population and
sample objects in (\ref{pscore}), we can derive similarly
\begin{equation}
\pi^{\ast}(x)=\frac{\pi(x)\frac{\pi^{\ast}}{\pi}}{\pi(x)\frac{\pi^{\ast}}{\pi
}+(1-\pi(x))\frac{1-\pi^{\ast}}{1-\pi}}. \label{pi*}%
\end{equation}
In addition, the population density $g(x)$ of $X$ satisfies%
\[
g(x)=g(x|D=1)\pi+g(x|D=0)(1-\pi)=g^{\ast}(x|D=1)\pi+g^{\ast}(x|D=0)(1-\pi),
\]
so by (\ref{g*Bayes})
\begin{equation}
g(x)=g^{\ast}(x)\left(  \pi^{\ast}(x)\frac{\pi}{\pi^{\ast}}+(1-\pi^{\ast
}(x))\frac{1-\pi}{1-\pi^{\ast}}\right)  . \label{g*tog}%
\end{equation}
Because $\pi^{\ast}(x)$, $g^{\ast}(x)$, and $\pi^{\ast}$ are identified from
the distribution of $(Y,D,X)$ in the stratified sample, the population density
$g(x)$ of $X$ can be identified \textit{if and only if} $\pi$ is known.

Equation (\ref{Ydensity}) implies that the conditional average treatment
effect (CATE) is identified by 
\begin{equation}
\beta(X)\equiv E[Y|D=1,X]-E[Y|D=0,X]=E^{\ast}[Y|D=1,X]-E^{\ast}[Y|D=0,X]\equiv
\beta^{\ast}(X), \label{matching-works}%
\end{equation}
where $E$ and $E^{\ast}$ denote the expectations taken with respect to
$h(y|X,D=d)$ and $h^{\ast}(y|X,D=d)$, which are equal. This is the sense in which conditioning
on $X$ \textquotedblleft works\textquotedblright\ \citep{Heckman_Todd_2009}.

\subsection{Average Treatment Effect (ATE)}

We now explore whether the success of conditioning translates into success in
identifying the ATE, $\beta\equiv E[Y_{1}-Y_{0}]$. By iterated expectations,
the ATE and its sample counterpart are
\begin{align*}
E[Y_{1}-Y_{0}]  &  =E[E[Y|D=1,X]-E[Y|D=0,X]],\\
E^{\ast}[Y_{1}-Y_{0}]  &  =E^{\ast}[E^{\ast}[Y|D=1,X]-E^{\ast}[Y|D=0,X]].
\end{align*}
Because $g^{\ast}(x)\neq g(x)$ by (\ref{g*tog}), the sample counterpart does
not identify the ATE even though $\beta(x)=\beta^*(x)$
\[
E[Y_{1}-Y_{0}]=\int\beta(x)g(x)dx\neq\int\beta^{\ast}(x)g^{\ast}(x)dx=E^{\ast
}[Y_{1}-Y_{0}].
\]
This result is intuitive and unsurprising. The problem is not the failure of
conditioning, but is due to
averaging the CATE over the wrong distribution of $X$.

The ATE cannot be identified by conditioning on the propensity score either.
Because the independence of the potential outcomes and $D$ given $X$ implies
independence given the propensity score, it follows by iterated expectations
that $E[Y_{1}-Y_{0}]=E[E[Y|D=1,\pi(X)]-E[Y|D=0,\pi(X)]]$. From (\ref{pi}) and
(\ref{pi*}), there is a 1-1 relation between $\pi(x)$ and $\pi^{\ast}(x)$. It
follows that the $\sigma$-algebras generated by these two random variables are
identical and thus conditioning on $\pi(X)$ and conditioning on $\pi^{\ast
}(X)$ are equivalent \citep{Heckman_Todd_2009}. Hence,
\begin{align*}
\delta(\pi(X))  &  \equiv E[Y|D=1,\pi(X)]-E[Y|D=0,\pi(X)]\\
&  =E^{\ast}[Y|D=1,\pi^{\ast}(X)]-E^{\ast}[Y|D=0,\pi^{\ast}(X)]\equiv
\delta^{\ast}(\pi^{\ast}(X)).
\end{align*}
Unfortunately, the averaging problem remains. We can see that
\[
E[Y_{1}-Y_{0}]=\int\delta(\pi(x))g(x)dx\neq\int\delta^{\ast}(\pi^{\ast
}(x))g^{\ast}(x)dx=E^{\ast}[Y_{1}-Y_{0}]
\]
because $g^{\ast}(x)\neq g(x)$ by (\ref{g*tog}). Conditioning only works
partially because it does not solve the problem of averaging. Therefore, we should be
careful interpreting \citet[p.S231]{Heckman_Todd_2009}'s observation that \textquotedblleft matching and selection procedures can
identify population treatment effects using misspecified estimates of
propensity scores fit on choice-based samples\textquotedblright\ when
estimating the ATE.

The problem remains if we use the inverse propensity score weighting (IPW).
IPW identifies the ATE by
\[
E[Y_{1}-Y_{0}]=E\left[  \frac{DY}{\pi(X)}-\frac{(1-D)Y}{1-\pi(X)}\right]  ,
\]
which is not identified by the sample counterpart
\[
E\left[  \frac{DY}{\pi(X)}-\frac{(1-D)Y}{1-\pi(X)}\right]  =E[\beta(X)]\neq
E^{\ast}[\beta^{\ast}(X)]=E^{\ast}\left[  \frac{DY}{\pi^{\ast}(X)}%
-\frac{(1-D)Y}{1-\pi^{\ast}(X)}\right]
\]
again due to improper averaging, even though $\beta(x)=\beta^{\ast}(x)$. 

If we know $\pi$, we can identify the ATE using reweighting based on
(\ref{g*tog})
\begin{align}
E[Y_{1}-Y_{0}]  &  =\int\beta^{\ast}(x)g(x)dx\nonumber\\
&  =\int\beta^{\ast}(x)\left(  \pi^{\ast}(x)\frac{\pi}{\pi^{\ast}}%
+(1-\pi^{\ast}(x))\frac{1-\pi}{1-\pi^{\ast}}\right)  g^{\ast}(x)dx\nonumber\\
&  =E^{\ast}\left[  \beta^{\ast}(X)\left(  D\frac{\pi}{\pi^{\ast}}%
+(1-D)\frac{1-\pi}{1-\pi^{\ast}}\right)  \right]  . \label{ATE}%
\end{align}
If we condition on the propensity score, we can also identify the ATE through
reweighting
\begin{align}
E[Y_{1}-Y_{0}]  &  =\int\delta^{\ast}(\pi^{\ast}(x))g(x)dx\nonumber\\
&  =\int\delta^{\ast}(\pi^{\ast}(x))\left(  \pi^{\ast}(x)\frac{\pi}{\pi^{\ast
}}+(1-\pi^{\ast}(x))\frac{1-\pi}{1-\pi^{\ast}}\right)  g^{\ast}%
(x)dx\nonumber\\
&  =E^{\ast}\left[  \delta^{\ast}(\pi^{\ast}(X))\left(  D\frac{\pi}{\pi^{\ast
}}+(1-D)\frac{1-\pi}{1-\pi^{\ast}}\right)  \right]  . \label{ATE-PS}%
\end{align}
In addition, the ATE can be identified by a reweighted version of IPW%
\begin{equation}
E^{\ast}\left[  \left(  \frac{DY}{\pi^{\ast}(X)}-\frac{(1-D)Y}{1-\pi^{\ast
}(X)}\right)  \left(  \pi^{\ast}(X)\frac{\pi}{\pi^{\ast}}+(1-\pi^{\ast
}(X))\frac{1-\pi}{1-\pi^{\ast}}\right)  \right]  . \label{IPW}%
\end{equation}

In sum, conventional methods do not identify the ATE, but the problem can be
overcome by reweighting the observations according to the population
distribution of $X$.

\subsection{Average Treatment Effect on the Treated (ATT)}

Next we examine the ATT, $\gamma\equiv E[Y_{1}-Y_{0}|D=1]$. Note that by
(\ref{g*Bayes}) and (\ref{matching-works})
\begin{align}
E[Y_{1}-Y_{0}|D=1]  &  =\int\beta(x)g(x|D=1)dx\nonumber\\
&  =\int\beta^{\ast}(x)g^{\ast}(x|D=1)dx=E^{\ast}[Y_{1}-Y_{0}|D=1].
\label{ATT}%
\end{align}
The ATT is identified despite stratification. We can also identify the ATT
conditioning on the propensity score because
\begin{align}
E[Y_{1}-Y_{0}|D=1]  &  =\int\delta(\pi(x))g(x|D=1)dx\nonumber\\
&  =\int\delta^{\ast}(\pi^{\ast}(x))g^{\ast}(x|D=1)dx=E^{\ast}[Y_{1}%
-Y_{0}|D=1]. \label{ATT-PS}%
\end{align}
Lastly, note that by iterated expectations
\[
E[Y_{1}-Y_{0}|D=1]=\frac{1}{\pi}E\left[  \left(  \frac{DY}{\pi(X)}%
+\frac{(1-D)Y}{1-\pi(X)}\right)  \pi(X)\right]  ,
\]
whose sample counterpart identifies the ATT
\[
\frac{1}{\pi^{\ast}}E^{\ast}\left[  \left(  \frac{DY}{\pi^{\ast}(X)}%
+\frac{(1-D)Y}{1-\pi^{\ast}(X)}\right)  \pi^{\ast}(X)\right]  =E^{\ast}%
[\beta^{\ast}(X)|D=1]=E[\beta(X)|D=1]=E[Y_{1}-Y_{0}|D=1].
\]
Hence, the ATT can be identified using IPW as well.

In sum, the ATT is identified by conventional methods. Stratification has no
impact on the averaging because the distribution of $X$ given $D=1$ is the
same in the population and stratified sample.

\subsection{Asymptotic Distribution}

\subsubsection{ATE Estimators}

Assuming that the population fraction $\pi$ is known, equation (\ref{ATE}) suggests that we can estimate the ATE $\beta$ by a
semiparametric estimator $\hat{\beta}$ based on the moments
\begin{align}
E^{\ast}[Y-\beta_{1}^{\ast}(X)|D=1,X]  &  =0\nonumber\\
E^{\ast}[Y-\beta_{0}^{\ast}(X)|D=0,X]  &  =0\nonumber\\
E^{\ast}\left[  \left(  D\frac{\pi}{\pi^{\ast}}+(1-D)\frac{1-\pi}{1-\pi^{\ast
}}\right)  (\beta_{1}^{\ast}(X)-\beta_{0}^{\ast}(X))-\beta\right]   &  =0,
\label{ATE-estimator}%
\end{align}
where $\beta_{d}^{\ast}(X)\equiv E^{\ast}[Y|D=d,X]$, $d=0,1$. Alternatively,
equation (\ref{ATE-PS}) suggests a semiparametric estimator using the
propensity score
\begin{align}
E^{\ast}[D-\pi^{\ast}(X)|X]  &  =0\nonumber\\
E^{\ast}[Y-\delta_{1}^{\ast}(\pi^{\ast}(X))|D=1,\pi^{\ast}(X)]  &
=0\nonumber\\
E^{\ast}[Y-\delta_{0}^{\ast}(\pi^{\ast}(X))|D=0,\pi^{\ast}(X)]  &
=0\nonumber\\
E^{\ast}\left[  \left(  D\frac{\pi}{\pi^{\ast}}+(1-D)\frac{1-\pi}{1-\pi^{\ast
}}\right)  (\delta_{1}^{\ast}(\pi^{\ast}(X))-\delta_{0}^{\ast}(\pi^{\ast
}(X)))-\beta\right]   &  =0, \label{ATE-PS-estimator}%
\end{align}
where $\delta_{d}^{\ast}(\pi^{\ast}(X))\equiv E^{\ast}[Y|D=d,\pi^{\ast}(X)]$,
$d=0,1$.\footnote{The moment conditions (\ref{ATE-estimator}) and
(\ref{ATE-PS-estimator}) can be understood to be semiparametric
generalization of the parametric model considered by \citet{Nevo2003}, e.g.}
Applying \citet{Newey_1994}, we derive the influence functions of these ATE
estimators, presented in Propositions \ref{prop:ATE} and \ref{prop:ATE-PS}.\footnote{All the proofs are provided in the Supplementary Appendix, which is
available upon request.}  Propositions \ref{prop:ATE} and \ref{prop:ATE-PS} are both predicated on the assumption that the population fraction $\pi$ is known. The sample fraction $\pi^{\ast}$ may or may not be known, and the two propositions consider both cases.\footnote{The representations in (\ref{ATE-estimator}) and
(\ref{ATE-PS-estimator}) assume that $\pi^{\ast}$ is known. If $\pi^{\ast}$ is
unknown, we can add one more moment $E^{\ast}[D-\pi^{\ast}]=0$ to each system.} 

\begin{proposition}
\label{prop:ATE}If $\pi^{\ast}$ is known, the influence function of the ATE
estimator based on (\ref{ATE-estimator}) is
\begin{align}
&  \left(  D\frac{\pi}{\pi^{\ast}}+(1-D) \frac{1-\pi}{1-\pi^{\ast}}\right)
(\beta_{1}(X)-\beta_{0}(X)) -\beta\nonumber\\
&  +\left(  \pi^{\ast}(X) \frac{\pi}{\pi^{\ast}}+(1-\pi^{\ast}(X)) \frac
{1-\pi}{1-\pi^{\ast}}\right)  \left(  \frac{D}{\pi^{\ast}(X) }(Y-\beta_{1}(X))
-\frac{1-D}{1-\pi^{\ast}( X)}(Y-\beta_{0}(X)) \right)  , \label{ATE-infl}%
\end{align}
where $\beta_{d}(X)\equiv E[Y|D=d,X]$, $d=0,1$. If $\pi^{\ast}$ is estimated, the influence function is the sum of (\ref{ATE-infl}) and
\begin{equation}
E^{\ast}\left[  \left(  -\pi^{\ast}(X) \frac{\pi}{(\pi^{\ast})^{2}}+(
1-\pi^{\ast}(X)) \frac{1-\pi}{(1-\pi^{\ast})^{2}}\right)  (\beta_{1}
(X)-\beta_{0}(X)) \right]  (D-\pi^{\ast}). \label{ATE-infl-adjust}%
\end{equation}

\end{proposition}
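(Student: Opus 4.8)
The estimator $\hat{\beta}$ defined by (\ref{ATE-estimator}) is a two-step semiparametric estimator: the first two conditional moment restrictions identify the infinite-dimensional nuisances $\beta_{1}^{\ast}(\cdot)$ and $\beta_{0}^{\ast}(\cdot)$ as the nonparametric regressions of $Y$ on $X$ within the $D=1$ and $D=0$ strata, and the last equation is a just-identified method-of-moments condition for the scalar $\beta$. Writing $m(Z;\beta,\beta_{1}^{\ast},\beta_{0}^{\ast})\equiv\omega(D)\{\beta_{1}^{\ast}(X)-\beta_{0}^{\ast}(X)\}-\beta$ with $\omega(D)\equiv D\frac{\pi}{\pi^{\ast}}+(1-D)\frac{1-\pi}{1-\pi^{\ast}}$, and noting $\partial E^{\ast}[m]/\partial\beta=-1$, the influence function of $\hat{\beta}$ equals $m$ evaluated at the truth plus a correction for the first-step estimation error in $(\hat{\beta}_{1}^{\ast},\hat{\beta}_{0}^{\ast})$. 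Since $\beta_{d}^{\ast}(X)=\beta_{d}(X)$ by (\ref{matching-works}), the leading piece $m(Z;\beta,\beta_{1},\beta_{0})=\omega(D)(\beta_{1}(X)-\beta_{0}(X))-\beta$ is exactly the first line of (\ref{ATE-infl}); it remains to compute the correction, and, when $\pi^{\ast}$ is estimated, the additional adjustment.

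For the correction I would follow \citet{Newey_1994}. A smooth path that perturbs $\beta_{1}^{\ast}$ in a direction $\Delta_{1}$ changes $E^{\ast}[m]$ by $E^{\ast}[\omega(D)\Delta_{1}(X)]=E^{\ast}[w^{\ast}(X)\Delta_{1}(X)]$, where $w^{\ast}(X)\equiv\pi^{\ast}(X)\frac{\pi}{\pi^{\ast}}+(1-\pi^{\ast}(X))\frac{1-\pi}{1-\pi^{\ast}}$ is the reweighting factor of (\ref{g*tog}); perturbing $\beta_{0}^{\ast}$ analogously yields $-E^{\ast}[w^{\ast}(X)\Delta_{0}(X)]$. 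Because $\hat{\beta}_{1}^{\ast}(x)$ is a nonparametric estimator of $E^{\ast}[Y\mid D=1,X=x]$, which is recovered only from the $D=1$ observations, its estimation error enters the influence function through the path-derivative weight times the inverse-probability residual $\frac{D}{\pi^{\ast}(X)}(Y-\beta_{1}(X))$, namely $w^{\ast}(X)\frac{D}{\pi^{\ast}(X)}(Y-\beta_{1}(X))$; likewise $\hat{\beta}_{0}^{\ast}$ contributes $-w^{\ast}(X)\frac{1-D}{1-\pi^{\ast}(X)}(Y-\beta_{0}(X))$. Summing these two gives precisely the second line of (\ref{ATE-infl}); each has mean zero because $E^{\ast}[Y-\beta_{d}(X)\mid D=d,X]=0$, consistent with the influence function being mean zero.

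When $\pi^{\ast}$ is not known I would append $E^{\ast}[D-\pi^{\ast}]=0$ and treat $(\beta,\pi^{\ast})$ jointly. Since the stacked system is triangular ($\hat{\pi}^{\ast}$, the sample mean of $D$, does not depend on $\hat{\beta}$ and has influence function $D-\pi^{\ast}$), the influence function of $\hat{\beta}$ picks up the extra term $\frac{\partial E^{\ast}[m]}{\partial\pi^{\ast}}(D-\pi^{\ast})$. Differentiating $\omega(D)$ in $\pi^{\ast}$ gives $-D\frac{\pi}{(\pi^{\ast})^{2}}+(1-D)\frac{1-\pi}{(1-\pi^{\ast})^{2}}$, so by iterated expectations $\frac{\partial E^{\ast}[m]}{\partial\pi^{\ast}}=E^{\ast}\big[(-\pi^{\ast}(X)\frac{\pi}{(\pi^{\ast})^{2}}+(1-\pi^{\ast}(X))\frac{1-\pi}{(1-\pi^{\ast})^{2}})(\beta_{1}(X)-\beta_{0}(X))\big]$, and multiplying by $(D-\pi^{\ast})$ reproduces (\ref{ATE-infl-adjust}); the first-step corrections contribute nothing to this cross-derivative, being expectations that vanish for every value of $\pi^{\ast}$ (the residuals $Y-\beta_{d}(X)$ integrate to zero given $D=d,X$ regardless of the weight).

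The main obstacle is the first-step correction term, in particular getting the inverse-probability factors $D/\pi^{\ast}(X)$ and $(1-D)/(1-\pi^{\ast}(X))$ right — these arise because $\beta_{d}^{\ast}$ is a conditional mean given \emph{both} $X$ and $D=d$, and pinning them down requires projecting the moment's pathwise derivative onto the nonparametric tangent space as in \citet{Newey_1994}. The rest is routine: since $m$ is linear in the nuisance $\beta_{1}^{\ast}-\beta_{0}^{\ast}$ there are no second-order bias terms, so under standard regularity conditions (smoothness of $\beta_{d}^{\ast}$ and of the density of $X$, $\pi^{\ast}(X)$ bounded away from $0$ and $1$, a fast enough nonparametric convergence rate) the linearization remainder is $o_{p}(n^{-1/2})$ and the stated influence functions follow.
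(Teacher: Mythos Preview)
Your proposal is correct and follows essentially the same route as the paper: both invoke \citet{Newey_1994}'s pathwise-derivative framework, obtain the main term $\omega(D)\beta(X)-\beta$, compute the functional derivative weight $w^{\ast}(X)=E^{\ast}[\omega(D)\mid X]$, and pair it with the inverse-probability residuals $D(Y-\beta_{1}(X))/\pi^{\ast}(X)$ and $(1-D)(Y-\beta_{0}(X))/(1-\pi^{\ast}(X))$ to produce the correction, then handle estimated $\pi^{\ast}$ by differentiating $\omega(D)$ and taking $E^{\ast}[\cdot\mid X]$. The only difference is presentational: the paper's Lemma \ref{lem:ATE-adjust} derives the correction explicitly by writing out the least-squares orthogonality conditions for $h_{1},h_{2}$, choosing the test functions $\tilde{h}_{1}=w(X)/\pi^{\ast}(X,\theta)$ and $\tilde{h}_{2}=-w(X)/(1-\pi^{\ast}(X,\theta))$, and differentiating in $\theta$, whereas you invoke the same conclusion more directly as a known consequence of \citet{Newey_1994}.
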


\begin{proposition}
\label{prop:ATE-PS}The influence function of the ATE estimator based on
(\ref{ATE-PS-estimator}) equals (\ref{ATE-infl}) if $\pi^{\ast}$ is known, and
equals the sum of (\ref{ATE-infl}) and (\ref{ATE-infl-adjust}) if $\pi^{\ast}$
is estimated.
\end{proposition}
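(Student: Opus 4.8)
The plan is to follow the same route as the proof of Proposition~\ref{prop:ATE}: apply the semiparametric influence-function calculus of \citet{Newey_1994} to the moment system (\ref{ATE-PS-estimator}), treating $\pi^{\ast}(\cdot)$, $\delta_{1}^{\ast}(\cdot)$ and $\delta_{0}^{\ast}(\cdot)$ as infinite-dimensional nuisances and $\beta$ as the parameter of interest, and to show that the resulting expression collapses to (\ref{ATE-infl}). Write $w(D,X)\equiv D\tfrac{\pi}{\pi^{\ast}}+(1-D)\tfrac{1-\pi}{1-\pi^{\ast}}$ and, using the balancing property $\Pr^{\ast}(D=1\mid\pi^{\ast}(X))=\pi^{\ast}(X)$, $W(X)\equiv E^{\ast}[w(D,X)\mid\pi^{\ast}(X)]=\pi^{\ast}(X)\tfrac{\pi}{\pi^{\ast}}+(1-\pi^{\ast}(X))\tfrac{1-\pi}{1-\pi^{\ast}}$, which is exactly the weight appearing in (\ref{ATE-infl}) (and equals $g(X)/g^{\ast}(X)$ by (\ref{g*tog})). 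The influence function is the plug-in part of the last moment, $w(D,X)(\delta_{1}^{\ast}(\pi^{\ast}(X))-\delta_{0}^{\ast}(\pi^{\ast}(X)))-\beta$, plus one correction term for each nonparametric object.

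The corrections for $\delta_{1}^{\ast}$ and $\delta_{0}^{\ast}$ are routine: these functions enter the last moment linearly, and the standard projection formula gives $\tfrac{D}{\pi^{\ast}(X)}W(X)(Y-\delta_{1}^{\ast}(\pi^{\ast}(X)))$ and $-\tfrac{1-D}{1-\pi^{\ast}(X)}W(X)(Y-\delta_{0}^{\ast}(\pi^{\ast}(X)))$ --- precisely the bracketed terms of (\ref{ATE-infl}) but with $\delta_{d}^{\ast}(\pi^{\ast}(X))$ in place of $\beta_{d}(X)$. The remaining object $\pi^{\ast}(\cdot)$, estimated from $E^{\ast}[D-\pi^{\ast}(X)\mid X]=0$, enters the other moments only as the argument of $\delta_{d}^{\ast}$ and as the conditioning variable in the second and third moments; it is a nonparametrically generated regressor, and computing its correction is the technical heart of the proof and the step I expect to be the main obstacle. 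Its pathwise-derivative contribution mixes a chain-rule piece through $(\delta_{d}^{\ast})'$ with a piece from the induced change of $\delta_{d}^{\ast}$ as the level sets of the conditioning variable shift; exploiting that $\pi^{\ast}(X)=E^{\ast}[D\mid X]$ is itself a conditional mean, these combine and simplify, and what survives is proportional to the first-step residual $D-\pi^{\ast}(X)$. Carrying the algebra through, one finds that the $\pi^{\ast}$-correction equals
\[
-(D-\pi^{\ast}(X))\Big(\tfrac{1-\pi}{1-\pi^{\ast}}\,\tfrac{\beta_{1}(X)-\delta_{1}^{\ast}(\pi^{\ast}(X))}{\pi^{\ast}(X)}+\tfrac{\pi}{\pi^{\ast}}\,\tfrac{\beta_{0}(X)-\delta_{0}^{\ast}(\pi^{\ast}(X))}{1-\pi^{\ast}(X)}\Big).
\]

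To finish, I would add the plug-in term, the $\delta_{d}^{\ast}$-corrections and this $\pi^{\ast}$-correction and check that every $\delta_{d}^{\ast}(\pi^{\ast}(X))$ turns into $\beta_{d}(X)$. The enabling identity is $\delta_{d}^{\ast}(\pi^{\ast}(X))=E^{\ast}[\beta_{d}(X)\mid\pi^{\ast}(X)]$ --- valid because $\beta_{d}=\beta_{d}^{\ast}$ and $D\perp(Y_{0},Y_{1})\mid\pi^{\ast}(X)$ in the sample --- so that $\beta_{d}(X)-\delta_{d}^{\ast}(\pi^{\ast}(X))$ has mean zero given $\pi^{\ast}(X)$; a short law-of-iterated-projections computation then makes the residual terms telescope, leaving exactly (\ref{ATE-infl}). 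This is the analytic counterpart of the classical fact \citep{Heckman_Todd_2009} that conditioning on $\pi^{\ast}(X)$ and on $X$ are equivalent, and it is why the two influence functions coincide \emph{exactly} rather than the propensity-score route merely being as efficient. Finally, when $\pi^{\ast}$ is unknown it is estimated from the extra moment $E^{\ast}[D-\pi^{\ast}]=0$, whose contribution is the pathwise derivative of the last moment in the scalar $\pi^{\ast}$, namely (\ref{ATE-infl-adjust}); adding it gives the stated result.
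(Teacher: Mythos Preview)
Your proposal is correct and follows essentially the same route as the paper: decompose the influence function into the plug-in term (\ref{ATE-PS-main}), the $(\delta_1^{\ast},\delta_0^{\ast})$-correction (\ref{ATE-PS-beta}), and the generated-regressor correction for $\pi^{\ast}(\cdot)$, then verify that the sum collapses to (\ref{ATE-infl}); the case of estimated $\pi^{\ast}$ is handled identically. The one methodological difference is in how the $\pi^{\ast}(\cdot)$-correction is obtained: the paper invokes \citet[Theorem 7]{HR_2013} as a black box, which yields a three-term expression (equations (\ref{HR1})--(\ref{HR3}) summed and multiplied by $D-\pi^{\ast}(X)$), while you sketch a direct Newey-style pathwise-derivative computation and arrive at the more compact form displayed in your proposal. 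The two expressions are algebraically identical (your compact form is exactly what (\ref{HR-all}) simplifies to once the $A_1$ and $A_0$ coefficients are collected), so the difference is purely one of packaging. Your final telescoping step is then just algebra---the law of iterated projections is needed earlier, to make $\beta_d(X)$ appear in the $\pi^{\ast}$-correction, not in the cancellation itself.
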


The results mirror those in the non-stratified case, where various ATE estimators have the same asymptotic distribution. We can derive the asymptotic
variance of the ATE estimators using (\ref{ATE-infl}) if $\pi^{\ast}$ is
known. Denote $\epsilon_{d}\equiv Y-\beta_{d}(X)$ and $\sigma_{d}^{2}(X)
\equiv E^{\ast}[\epsilon_{d}^{2}|X]$, $d=0,1$, and $w(X)\equiv\pi^{\ast}(X)
\frac{\pi}{\pi^{\ast}}+(1-\pi^{\ast}(X)) \frac{1-\pi}{1-\pi^{\ast}}$. The
asymptotic variance of $\sqrt{n}(\hat{\beta}-\beta)$ is
\begin{equation}
E^{\ast}\left[  \left(  \left(  D\frac{\pi}{\pi^{\ast}}+(1-D) \frac{1-\pi
}{1-\pi^{\ast}}\right)  \beta(X) -\beta\right)  ^{2} +\frac{w(X)^{2}\sigma
_{1}^{2}(X)}{\pi^{\ast}(X)} +\frac{w(X)^{2}\sigma_{0}^{2}(X)}{1-\pi^{\ast}
(X)}\right]. \label{ATE-var}
\end{equation}

\subsubsection{ATT Estimators}

Equations (\ref{ATT}) and (\ref{ATT-PS}) serve as the basis for estimating the
ATT. Because $\beta^{\ast}(X)=E^{\ast}[Y-\beta_{0}^{\ast}(X)|D=1,X]$, where
$\beta_{0}^{\ast}(X)\equiv E^{\ast}[Y|D=0,X]$, it follows by iterated
expectations that $E^{\ast}[\beta^{\ast}(X)|D=1]=E^{\ast}[Y-\beta_{0}^{\ast
}(X)|D=1]=E^{\ast}[D(Y-\beta_{0}^{\ast}(X))]/\pi^{\ast}$. This suggests an
estimator of $\gamma$ of the form
\begin{equation}
\hat{\gamma}\equiv\frac{1}{n^{-1}\sum_{i=1}^{n}D_{i}}\cdot\frac{1}{n}%
\sum_{i=1}^{n}D_{i}(Y_{i}-\hat{\beta}_{0}(X_{i})), \label{ATT-estimator}%
\end{equation}
where $\hat{\beta}_{0}(\cdot)$ is a nonparametric estimator of $\beta
_{0}^{\ast}(\cdot)$. This estimator subtracts the predicted outcome for the
counterfactual from the outcome of a treated unit.
Although intuitive, we are not aware of any references that establish the
asymptotic distribution of such an estimator.

Define $\delta_{0}^{\ast}(\pi^{\ast}(X))\equiv E^{\ast}[Y|D=0,\pi^{\ast}(X)]$.
A similar estimator can be constructed based on the propensity score
\begin{equation}
\hat{\gamma}_{PS}\equiv\frac{1}{n^{-1}\sum_{i=1}^{n}D_{i}}\cdot\frac{1}{n}%
\sum_{i=1}^{n}D_{i}(Y_{i}-\hat{\delta}_{0}(\hat{\pi}(X_{i}))),
\label{ATT-PS-estimator}
\end{equation}
where $\hat{\delta}_{0}(\cdot)$ and $\hat{\pi}(\cdot)$ are nonparametric
estimators of $\delta_{0}^{\ast}(\cdot)$ and $\pi^{\ast}(\cdot)$,
respectively. Propositions \ref{prop:ATT} and \ref{prop:ATT-PS} derive the
asymptotic variances of these ATT estimators.

\begin{proposition}
\label{prop:ATT}If $\pi^{\ast}$ is known, the asymptotic variance of $\sqrt
{n}(\hat{\gamma}-\gamma)$ is equal to
\begin{equation}
E^{\ast}\left[  \frac{\pi^{\ast}(X)(\beta(X)-\gamma)^{2}}{(\pi^{\ast})^{2}%
}+\frac{\pi^{\ast}(X) \sigma_{1}^{2}(X)}{(\pi^{\ast})^{2}}+\frac{\pi^{\ast
}(X)^{2}\sigma_{0}^{2}(X) }{(\pi^{\ast})^{2}(1-\pi^{\ast}(X))}\right]  .
\label{ATT-var}%
\end{equation}

\end{proposition}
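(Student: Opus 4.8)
The plan is to treat $\hat{\gamma}$ in (\ref{ATT-estimator}) as a two-step semiparametric estimator — a nonparametric first step $\hat{\beta}_{0}(\cdot)$ for the conditional mean $\beta_{0}^{\ast}(x)=E^{\ast}[Y\mid D=0,X=x]$, followed by a ratio of sample averages — and to derive its asymptotic linear representation from \citet{Newey_1994}. Write $\hat{m}\equiv n^{-1}\sum_{i}D_{i}(Y_{i}-\hat{\beta}_{0}(X_{i}))$ and $\hat{\pi}^{\ast}\equiv n^{-1}\sum_{i}D_{i}$, so $\hat{\gamma}=\hat{m}/\hat{\pi}^{\ast}$; its probability limit is $m/\pi^{\ast}$ with $m\equiv E^{\ast}[D(Y-\beta_{0}^{\ast}(X))]$, and by iterated expectations together with (\ref{ATT}) one has $m=\pi^{\ast}E^{\ast}[\beta^{\ast}(X)\mid D=1]=\pi^{\ast}\gamma$, consistent with the identification already established. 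A delta-method expansion of the ratio then gives
\[ \sqrt{n}(\hat{\gamma}-\gamma)=\frac{1}{\pi^{\ast}}\Bigl[\sqrt{n}(\hat{m}-m)-\gamma\,\sqrt{n}(\hat{\pi}^{\ast}-\pi^{\ast})\Bigr]+o_{p}(1),\qquad \sqrt{n}(\hat{\pi}^{\ast}-\pi^{\ast})=\tfrac{1}{\sqrt{n}}\textstyle\sum_{i}(D_{i}-\pi^{\ast}). \]

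The substantive step will be linearizing $\hat{m}$ while accounting for the nonparametric first step. Because $\hat{\beta}_{0}$ enters $\hat{m}$ only through $-n^{-1}\sum_{i}D_{i}\hat{\beta}_{0}(X_{i})$, an estimator of $E^{\ast}[D\beta_{0}^{\ast}(X)]=E^{\ast}[\pi^{\ast}(X)\beta_{0}^{\ast}(X)]$, the \citet{Newey_1994} correction term associated with a nonparametrically estimated conditional mean contributes $-\tfrac{\pi^{\ast}(X)(1-D)}{1-\pi^{\ast}(X)}\bigl(Y-\beta_{0}^{\ast}(X)\bigr)$ to the influence function of $\hat{m}$. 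Thus, under the usual regularity conditions (overlap $0<\pi^{\ast}(X)<1$, smoothness of $\beta_{0}^{\ast}$, finite second moments, and nonparametric rates fast enough that the remainder is $o_{p}(n^{-1/2})$),
\[ \sqrt{n}(\hat{m}-m)=\tfrac{1}{\sqrt{n}}\textstyle\sum_{i}\Bigl[D_{i}(Y_{i}-\beta_{0}^{\ast}(X_{i}))-m-\tfrac{\pi^{\ast}(X_{i})(1-D_{i})}{1-\pi^{\ast}(X_{i})}(Y_{i}-\beta_{0}^{\ast}(X_{i}))\Bigr]+o_{p}(1). \]
Substituting into the ratio expansion, cancelling $m=\pi^{\ast}\gamma$, using $\beta_{0}^{\ast}(X)=\beta_{0}(X)$ from (\ref{Ydensity}) and the identity $D-\tfrac{\pi^{\ast}(X)(1-D)}{1-\pi^{\ast}(X)}=\tfrac{D-\pi^{\ast}(X)}{1-\pi^{\ast}(X)}$ yields $\sqrt{n}(\hat{\gamma}-\gamma)=n^{-1/2}\sum_{i}\psi(W_{i})+o_{p}(1)$ with
\[ \psi(W)=\frac{1}{\pi^{\ast}}\Bigl[\frac{D-\pi^{\ast}(X)}{1-\pi^{\ast}(X)}\bigl(Y-\beta_{0}(X)\bigr)-D\gamma\Bigr]. \]

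It then remains to evaluate $E^{\ast}[\psi^{2}]$, which is routine bookkeeping. One first checks $E^{\ast}[\psi]=0$ (using $E^{\ast}[\pi^{\ast}(X)\beta(X)]=\pi^{\ast}\gamma$, which follows from (\ref{ATT})). Next, split $\psi$ on the disjoint events $\{D=1\}$ and $\{D=0\}$, on which $\pi^{\ast}\psi=(\beta(X)-\gamma)+(Y-\beta_{1}(X))$ and $\pi^{\ast}\psi=-\tfrac{\pi^{\ast}(X)}{1-\pi^{\ast}(X)}(Y-\beta_{0}(X))$ respectively. Taking $E^{\ast}[\,\cdot\mid X]$, the cross term vanishes because $E^{\ast}[Y-\beta_{1}(X)\mid D=1,X]=0$, and using $E^{\ast}[D\mid X]=\pi^{\ast}(X)$ together with $E^{\ast}[\epsilon_{d}^{2}\mid D=d,X]=\sigma_{d}^{2}(X)$ gives
\[ E^{\ast}[\psi^{2}]=\frac{1}{(\pi^{\ast})^{2}}\,E^{\ast}\!\Bigl[\pi^{\ast}(X)\bigl((\beta(X)-\gamma)^{2}+\sigma_{1}^{2}(X)\bigr)+\frac{\pi^{\ast}(X)^{2}}{1-\pi^{\ast}(X)}\,\sigma_{0}^{2}(X)\Bigr], \]
which is exactly (\ref{ATT-var}). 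The main obstacle is the first step — correctly pinning down the \citet{Newey_1994} adjustment for the nonparametrically estimated $\beta_{0}^{\ast}$ and verifying the smoothness/trimming and rate conditions that make the asymptotic linearization valid; once $\psi$ is obtained, the ratio expansion and the conditional second-moment computation are mechanical.
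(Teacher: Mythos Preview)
Your proposal is correct and follows essentially the same approach as the paper: you invoke the \citet{Newey_1994} correction for the nonparametric first step (which the paper isolates as Lemma~\ref{lem:ATT-adjust} and yields exactly your $-\tfrac{\pi^{\ast}(X)(1-D)}{1-\pi^{\ast}(X)}(Y-\beta_{0}(X))$), apply the delta method to the ratio, and compute the second moment of the resulting influence function. The only cosmetic difference is that you collapse the influence function into the compact form $\psi=(\pi^{\ast})^{-1}\bigl[\tfrac{D-\pi^{\ast}(X)}{1-\pi^{\ast}(X)}(Y-\beta_{0}(X))-D\gamma\bigr]$ before splitting on $D$, whereas the paper carries the decomposition $D\varepsilon_{1}-\tfrac{\pi^{\ast}(X)}{1-\pi^{\ast}(X)}(1-D)\varepsilon_{0}$ throughout; the two expressions are identical on $\{D=1\}$ and $\{D=0\}$.
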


\begin{proposition}
\label{prop:ATT-PS}If $\pi^{\ast}$ is known, the asymptotic variance of
$\sqrt{n}(\hat{\gamma}_{PS}-\gamma)$ is the same as in (\ref{ATT-var}).
\end{proposition}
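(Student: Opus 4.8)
The plan is to obtain the influence function of $\hat{\gamma}_{PS}$ by treating it as a two-step semiparametric estimator, with first step the nonparametric estimates $\hat{\pi}(\cdot)$ and $\hat{\delta}_{0}(\cdot)$ and second step the moment identity $\gamma\pi^{\ast}=E^{\ast}[D(Y-\delta_{0}^{\ast}(\pi^{\ast}(X)))]$, and then to show that the resulting asymptotic variance collapses to (\ref{ATT-var}). The cleanest target is to prove that $\hat{\gamma}_{PS}$ is asymptotically linear with the \emph{same} influence function as $\hat{\gamma}$ in the proof of Proposition \ref{prop:ATT}, from which equality of the variances is immediate; a direct computation of $E^{\ast}[\psi_{PS}^{2}]$ is the fallback if the two influence functions only share a second moment. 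Writing $\hat{\gamma}_{PS}=\hat{m}_{1}/\hat{m}_{2}$ with $\hat{m}_{2}=n^{-1}\sum_{i}D_{i}$ and $\hat{m}_{1}=n^{-1}\sum_{i}D_{i}(Y_{i}-\hat{\delta}_{0}(\hat{\pi}(X_{i})))$, the expansion of $\sqrt{n}(\hat{\gamma}_{PS}-\gamma)$ splits into: (i) the leading term obtained by freezing the first step and $\hat{m}_{2}$ at their limits; (ii) the \citet{Newey_1994} correction for estimating $\delta_{0}^{\ast}(\cdot)$ among the controls; (iii) the correction for estimating $\pi^{\ast}(\cdot)$, which enters both as the argument of $\hat{\delta}_{0}$ and as the regressor used to construct $\hat{\delta}_{0}$; and (iv) the correction $-\gamma(D_{i}-\pi^{\ast})/\pi^{\ast}$ for the denominator. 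Pieces (i), (ii), (iv) are routine; piece (iii) is where the work concentrates.

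The identity that makes the pieces cohere is the balancing property of the propensity score. Since $\pi^{\ast}(X)$ is $X$-measurable, $E^{\ast}[D\mid\pi^{\ast}(X)]=\pi^{\ast}(X)$; and because the within-$D=0$ distribution of $X$ on a level set $\{\pi^{\ast}(X)=p\}$ equals the unconditional distribution of $X$ on that set, $E^{\ast}[\beta_{0}^{\ast}(X)\mid D=0,\pi^{\ast}(X)]=E^{\ast}[\beta_{0}^{\ast}(X)\mid\pi^{\ast}(X)]=\delta_{0}^{\ast}(\pi^{\ast}(X))$, and likewise for $\beta_{1}^{\ast}$. The first fact turns the correction in (ii) into the familiar projected residual $-\tfrac{\pi^{\ast}(X)}{1-\pi^{\ast}(X)}(1-D)(Y-\delta_{0}^{\ast}(\pi^{\ast}(X)))/\pi^{\ast}$, i.e. the control-side term of the influence function of $\hat{\gamma}$ but with $\beta_{0}^{\ast}(X)$ coarsened to its level-set average $\delta_{0}^{\ast}(\pi^{\ast}(X))$. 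The second fact guarantees $E^{\ast}[D\,\delta_{0}^{\ast}(\pi^{\ast}(X))]=E^{\ast}[D\,\beta_{0}^{\ast}(X)]=\gamma\pi^{\ast}$ and $E^{\ast}[\delta_{0}^{\ast}(\pi^{\ast}(X))\mid D=1]=E^{\ast}[\beta_{0}^{\ast}(X)\mid D=1]$, so the leading term in (i) and the population moment being solved are unaffected by the coarsening; it also identifies $\delta_{1}^{\ast}(\pi^{\ast}(X))-\delta_{0}^{\ast}(\pi^{\ast}(X))$ with $E^{\ast}[\beta(X)\mid\pi^{\ast}(X)]$, which is what lets the ``signal'' part of the influence function rearrange into $(\beta(X)-\gamma)$-type terms.

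The main obstacle is piece (iii): $\hat{\delta}_{0}$ is a nonparametric regression whose covariate is itself the nonparametric estimate $\hat{\pi}(X)$, and it is then evaluated at the generated regressor $\hat{\pi}(X_{i})$. I would expand $\hat{\delta}_{0}(\hat{\pi}(X_{i}))-\delta_{0}^{\ast}(\pi^{\ast}(X_{i}))$ into (a) the regression error at the true argument (which feeds (ii)), (b) the first-order effect of replacing the argument $\pi^{\ast}(X_{i})$ by $\hat{\pi}(X_{i})$, equal to $\delta_{0}^{\ast\prime}(\pi^{\ast}(X_{i}))(\hat{\pi}(X_{i})-\pi^{\ast}(X_{i}))$ to leading order, and (c) the first-order effect of using $\hat{\pi}(X)$ in place of $\pi^{\ast}(X)$ as the regressor, which after the usual kernel averaging equals $\delta_{0}^{\ast\prime}(\pi^{\ast}(X_{i}))$ times the level-set average of $\hat{\pi}(X)-\pi^{\ast}(X)$ given $\pi^{\ast}(X)=\pi^{\ast}(X_{i})$. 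Turning (b) and (c) into influence-function terms via the standard projection for nonparametric regression (each becoming a functional of the $(D_{j}-\pi^{\ast}(X_{j}))$'s) and using the balancing identities, I expect the contribution of (iii) to $\hat{m}_{1}$ to be exactly the term $-(\beta_{0}^{\ast}(X)-\delta_{0}^{\ast}(\pi^{\ast}(X)))(D-\pi^{\ast}(X))/(1-\pi^{\ast}(X))$ that upgrades $\delta_{0}^{\ast}(\pi^{\ast}(X))$ back to $\beta_{0}^{\ast}(X)$ everywhere in the influence function, so that $\hat{\gamma}_{PS}$ is asymptotically equivalent to $\hat{\gamma}$ and inherits the variance (\ref{ATT-var}). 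Establishing this cancellation, and the rate conditions on the two bandwidths and on $\hat{\pi}$ under which all remainders are $o_{p}(n^{-1/2})$, is the delicate part; should the generated-regressor channel not reproduce this term exactly, the fallback is to compute $E^{\ast}[\psi_{PS}^{2}]$ directly and verify, again with the balancing identities, that it reduces to (\ref{ATT-var}). As in Proposition \ref{prop:ATT}, whether $\pi^{\ast}$ is known plays no role here, since $\hat{\gamma}_{PS}$ always normalizes by $n^{-1}\sum_{i}D_{i}$.
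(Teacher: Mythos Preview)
Your plan is correct and lands on the same proof architecture as the paper: derive the influence function of the numerator $\hat m_{1}$, show it coincides with the influence function (\ref{ATT-num-influence}) of the numerator of $\hat\gamma$, then apply the delta method for the ratio. In particular, your anticipated correction from piece (iii),
\[
-\frac{1}{1-\pi^{\ast}(X)}\bigl(\beta_{0}(X)-\delta_{0}^{\ast}(\pi^{\ast}(X))\bigr)\bigl(D-\pi^{\ast}(X)\bigr),
\]
is exactly what the paper obtains (its Lemma \ref{lem:ATT-PS-adjust}), and your observation that adding it to the $\tilde\varepsilon_{d}$ terms from pieces (i)--(ii) upgrades $\delta_{d}^{\ast}(\pi^{\ast}(X))$ back to $\beta_{d}(X)$ is precisely the algebra the paper carries out in (\ref{ATT-PS1})--(\ref{ATT-PS2}).

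The one methodological difference worth noting is how piece (iii) is obtained. You propose a direct Taylor expansion of $\hat\delta_{0}(\hat\pi(X_{i}))-\delta_{0}^{\ast}(\pi^{\ast}(X_{i}))$ into an argument-shift term (b) and a regressor-shift term (c), each proportional to $\delta_{0}^{\ast\prime}$, and then project onto $(D_{j}-\pi^{\ast}(X_{j}))$. The paper instead invokes \citet[Theorem 7]{HR_2013}, which packages the generated-regressor calculus into three ready-made pieces (their (\ref{ATT-HR1})--(\ref{ATT-HR3})); notably the first of these vanishes and the derivative $\delta_{0}^{\ast\prime}$ never appears explicitly, because the HR formula is written in terms of $\mu_{2}(X)-\mu_{2}(\pi^{\ast}(X))=\beta_{0}(X)-\delta_{0}^{\ast}(\pi^{\ast}(X))$ directly. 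Your route is more hands-on and, if made rigorous, would require tracking the cancellation between (b) and (c) carefully (they do not separately equal pieces of the answer; only their combination, after averaging over $D_{i}$ and using $E^{\ast}[D\mid X]=\pi^{\ast}(X)$, collapses to the displayed correction). The HR-2013 route buys you that bookkeeping for free and avoids smoothness assumptions on $\delta_{0}^{\ast}$.
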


The ATT estimators in (\ref{ATT-estimator}) and (\ref{ATT-PS-estimator}) have
the same asymptotic variance in (\ref{ATT-var}), which is equal to the
asymptotic variance bound based on the stratified sample.\footnote{Note that
(\ref{ATT-var}) is different from the asymptotic variance based on an
unstratified population. While the ATT can be consistently estimated, the
asymptotic variance is impacted by stratification.} See \cite{Hahn_1998}. So
both ATT estimators are efficient.\footnote{The fact that the ATE (or ATT)
estimators that condition on either the propensity score or the covariates
have the same asymptotic variance indicates that there is no efficiency gain
from using the propensity score in stratified samples. \citet{HR_2013} come to
the same conclusion in random sampling.}

\section{Endogenous Treatment}

In this section, we consider stratification on an endogenous treatment. We
have a binary instrument $Z$ and assume no covariates to focus on essentials.
\citet{AIR_1996} show that the local average treatment effect (LATE) is
identified by the Wald ratio $(E[Y|Z=1]-E[Y|Z=0])/(E[D|Z=1]-E[D|Z=0])$. If the
sample is stratified on the treatment status, we have
\begin{align*}
E[YZ|D=d]=  &  E^{\ast}[YZ|D=d]\\
E[Y(1-Z)|D=d]=  &  E^{\ast}[Y(1-Z)|D=d]\\
E[Z|D=d]=  &  E^{\ast}[Z|D=d],
\end{align*}
for $d=0,1$. However, because $\pi\neq\pi^{\ast}$, unconditional expectations
in the sample do not match those in the population. We can see that
\begin{align*}
E[Y|Z=1]-E[Y|Z=0]=  &  \frac{E[YZ]}{E[Z]}-\frac{E[Y(1-Z)]}{1-E[Z]}\\
\neq &  \frac{E^{\ast}[YZ]}{E^{\ast}[Z]}-\frac{E^{\ast}[Y(1-Z)]}{1-E^{\ast
}[Z]}=E^{\ast}[Y|Z=1]-E^{\ast}[Y|Z=0],
\end{align*}
and similarly $E[D|Z=1]-E[D|Z=0]\neq E^{\ast}[D|Z=1]-E^{\ast}[D|Z=0]$.
Therefore, the sample Wald ratio does not identify the population Wald ratio.

Observe that the population Wald ratio is the solution for $\beta$ in the
moment condition
\[
E\left\{
\begin{bmatrix}
1\\
Z
\end{bmatrix}
(Y-\alpha-\beta D)\right\}  =0.
\]
Proposition \ref{prop:weighting} suggests that we can identify the population
Wald ratio using reweighting.

\begin{proposition}
\label{prop:weighting}For any function $\varphi(Y,Z,D)$, we have
\[
E[\varphi(Y,Z,D)]=E^{\ast}\left[  \left(  D\frac{\pi}{\pi^{\ast}}%
+(1-D)\frac{1-\pi}{1-\pi^{\ast}}\right)  \varphi(Y,Z,D)\right]  .
\]

\end{proposition}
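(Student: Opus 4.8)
The plan is to reduce the claim to the law of total expectation, conditioning on the treatment indicator $D$, and then to invoke the fact that sampling is random \emph{within} each stratum. The only structural input needed is the analogue of (\ref{Ydensity})--(\ref{Xdensity}) for the pair $(Y,Z)$ in place of $(Y,X)$: because units are drawn at random within the $D=1$ and $D=0$ strata, the conditional law of $(Y,Z)$ given $D=d$ is the same in the population and in the stratified sample, so that
\[
E[\varphi(Y,Z,D)\mid D=d]=E^{\ast}[\varphi(Y,Z,D)\mid D=d],\qquad d=0,1.
\]

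First I would expand the right-hand side by iterated expectations over $D$ under the sampling distribution. Since $D$ is binary, $D\frac{\pi}{\pi^{\ast}}+(1-D)\frac{1-\pi}{1-\pi^{\ast}}$ equals $\frac{\pi}{\pi^{\ast}}$ on $\{D=1\}$ and $\frac{1-\pi}{1-\pi^{\ast}}$ on $\{D=0\}$, and $\Pr^{\ast}(D=1)=\pi^{\ast}$. Hence
\[
E^{\ast}\!\left[\left(D\tfrac{\pi}{\pi^{\ast}}+(1-D)\tfrac{1-\pi}{1-\pi^{\ast}}\right)\varphi(Y,Z,D)\right]
=\tfrac{\pi}{\pi^{\ast}}\,\pi^{\ast}\,E^{\ast}[\varphi\mid D=1]+\tfrac{1-\pi}{1-\pi^{\ast}}\,(1-\pi^{\ast})\,E^{\ast}[\varphi\mid D=0],
\]
which simplifies to $\pi\,E^{\ast}[\varphi\mid D=1]+(1-\pi)\,E^{\ast}[\varphi\mid D=0]$. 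Next I would replace each $E^{\ast}[\varphi\mid D=d]$ by $E[\varphi\mid D=d]$ using the within-stratum invariance displayed above, obtaining $\pi\,E[\varphi\mid D=1]+(1-\pi)\,E[\varphi\mid D=0]$. Finally, applying iterated expectations over $D$ in the population, where $\Pr(D=1)=\pi$, this last expression is exactly $E[\varphi(Y,Z,D)]$, completing the argument.

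There is essentially no hard step here: the result is a reweighting identity that corrects the mismatch between $\pi^{\ast}$ and $\pi$ one stratum at a time. The only point requiring care is the justification of the within-stratum invariance for $(Y,Z)$ — it is the exact counterpart of (\ref{Ydensity})--(\ref{Xdensity}), now with the instrument $Z$ playing the role previously played by $X$ — and the bookkeeping that the factors $\pi^{\ast}$ and $1-\pi^{\ast}$ coming from $\Pr^{\ast}(D=d)$ cancel against the denominators in the weights. No integrability issue arises beyond assuming $E[\varphi(Y,Z,D)]$ exists, since $\varphi$ is arbitrary and the weights are bounded whenever $0<\pi^{\ast}<1$.
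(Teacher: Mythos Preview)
Your proposal is correct and follows essentially the same route as the paper: condition on $D$ under $E^{\ast}$, let the weights cancel $\pi^{\ast}$ and $1-\pi^{\ast}$, invoke the within-stratum invariance $E^{\ast}[\varphi\mid D=d]=E[\varphi\mid D=d]$, and recombine via iterated expectations in the population. The paper phrases the invariance as ``the conditional distribution of $(Y,Z,D)$ given $D$ is the same in the population and stratified sample,'' which is exactly your analogue of (\ref{Ydensity})--(\ref{Xdensity}).
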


Proposition \ref{prop:weighting} implies that the population Wald ratio is the
solution for $\beta$ in the reweighted sample moment condition
\[
E^{\ast}\left\{  \left(  D\frac{\pi}{\pi^{\ast}}+(1-D)\frac{1-\pi}{1-\pi
^{\ast}}\right)
\begin{bmatrix}
1\\
Z
\end{bmatrix}
(Y-\alpha-\beta D)\right\}  =0.
\]
This fix requires knowledge of $\pi$.\footnote{\citet{Froelich_2007} considers
a nonparametric estimator of LATE that allows for covariates.}

\begin{remark}
In the unconfounded case without $X$, i.e., under random assignment,
stratification on $D$ does not pose any issue, because without $X$ there is no
need for averaging. In the endogenous case, however, the bias is present even
without $X$. The intuition can be found in the literature on stratification on
endogenous variables (e.g, \citealp{Hausman_Wise_1981}).
\end{remark}

\section{Conclusion}

For unconfounded treatment, stratification on the treatment status does not
bias conditional treatment effects. It does bias the ATE, which can be
corrected through reweighting if we know the fraction treated in the
population. Stratification does not bias the ATT. In the case of endogenous
treatment with an instrument, stratification biases the LATE. We propose a
reweighted method to identify the LATE.

\bibliography{StraTreatment}

\newpage
\appendix
\setcounter{page}{1} \setcounter{section}{1} \renewcommand{\thesection}{\Alph{section}}

\begin{center}
	{ {\Large Online Supplementary Material for \textquotedblleft Stratifying on Treatment
			Status\textquotedblright}}
\end{center}

\subsection{Useful Lemmas}

The following lemmas are used to prove the propositions in the paper.

\begin{lemma}
	\label{lem:ATE-adjust}The adjustment for the estimation of $(\beta
	_{1}(X),\beta_{0}(X))$ in (\ref{ATE-estimator}) is
	\begin{equation}
		\left(  \pi^{\ast}(X)\frac{\pi}{\pi^{\ast}}+(1-\pi^{\ast}(X))\frac{1-\pi
		}{1-\pi^{\ast}}\right)  \left(  \frac{D}{\pi^{\ast}(X)}\left(  Y-\beta
		_{1}(X)\right)  -\frac{1-D}{1-\pi^{\ast}(X)}(Y-\beta_{0}(X))\right)  .
		\label{ATE-adjust}%
	\end{equation}
	
\end{lemma}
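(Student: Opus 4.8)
The plan is to read off (\ref{ATE-adjust}) from the adjustment-term formula of \citet{Newey_1994} for semiparametric estimators with an estimated infinite-dimensional nuisance. In the moment system (\ref{ATE-estimator}) the only such nuisance is the pair $h=(\beta_1^{\ast},\beta_0^{\ast})$ of conditional expectations $\beta_d^{\ast}(x)=E^{\ast}[Y\mid D=d,X=x]$, and it enters only the third, unconditional, moment, and does so linearly. Writing that moment as $m(Z;\beta,h)\equiv\bigl(D\tfrac{\pi}{\pi^{\ast}}+(1-D)\tfrac{1-\pi}{1-\pi^{\ast}}\bigr)\bigl(\beta_1^{\ast}(X)-\beta_0^{\ast}(X)\bigr)-\beta$, the adjustment term is the mean-zero function $\delta(Z)$ characterized by $\frac{\partial}{\partial t}\bigl\{E^{\ast}[m(Z;\beta,h(F_t))]\bigr\}\big|_{t=0}=E^{\ast}[\delta(Z)S(Z)]$ for every regular parametric submodel $\{F_t\}$ through the truth with score $S$, where $\beta$, the outer expectation, and all objects other than $h$ (in particular $\pi$, $\pi^{\ast}$, and $\pi^{\ast}(\cdot)$) are held at the true distribution. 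Since $m$ is additively separable and linear in $\beta_1^{\ast}$ and $\beta_0^{\ast}$, I would compute the contributions of the two components separately and add them.

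For $\beta_1^{\ast}$, first take the expectation over $D$ given $X$ to get $E^{\ast}[w(X)\beta_1^{\ast}(X)]$, with $w(X)\equiv\pi^{\ast}(X)\tfrac{\pi}{\pi^{\ast}}+(1-\pi^{\ast}(X))\tfrac{1-\pi}{1-\pi^{\ast}}$ as in the text. Only a perturbation of the conditional density $h^{\ast}(y\mid x,D=1)$ moves $\beta_1^{\ast}$; if $S_1(y,x)$ is the associated subpopulation score (so $E^{\ast}[S_1\mid X,D=1]=0$), then $\frac{\partial}{\partial t}\beta_{1,t}^{\ast}(x)=E^{\ast}[(Y-\beta_1^{\ast}(X))S_1\mid X=x,D=1]$, hence $\frac{\partial}{\partial t}E^{\ast}[w(X)\beta_{1,t}^{\ast}(X)]=E^{\ast}\bigl[w(X)\,E^{\ast}[(Y-\beta_1^{\ast}(X))S_1\mid X,D=1]\bigr]$. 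The component of the full score carrying this perturbation is $D\,S_1(Y,X)$, and $E^{\ast}[D\mid X]=\pi^{\ast}(X)$; matching $E^{\ast}[\delta_1(Z)\,D\,S_1(Y,X)]$ to the previous display pins down $\delta_1(Z)=\tfrac{w(X)}{\pi^{\ast}(X)}\,D\,(Y-\beta_1^{\ast}(X))$. One then checks directly that this $\delta_1$ is orthogonal to perturbations of $h^{\ast}(y\mid x,D=0)$, of the conditional law of $D$ given $X$, and of the marginal of $X$, so it is the complete adjustment for $\beta_1^{\ast}$. The symmetric computation for $\beta_0^{\ast}$, tracking the minus sign with which it enters $m$, gives $\delta_0(Z)=-\tfrac{w(X)}{1-\pi^{\ast}(X)}\,(1-D)\,(Y-\beta_0^{\ast}(X))$.

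Adding the pieces yields $\delta(Z)=w(X)\bigl(\tfrac{D}{\pi^{\ast}(X)}(Y-\beta_1^{\ast}(X))-\tfrac{1-D}{1-\pi^{\ast}(X)}(Y-\beta_0^{\ast}(X))\bigr)$; substituting $w(X)=\pi^{\ast}(X)\tfrac{\pi}{\pi^{\ast}}+(1-\pi^{\ast}(X))\tfrac{1-\pi}{1-\pi^{\ast}}$ and using $\beta_d^{\ast}(X)=\beta_d(X)$ from (\ref{matching-works}) gives exactly (\ref{ATE-adjust}). An equivalent, more hands-on route is to insert a series (or local-polynomial) first-step estimator $\hat\beta_d(\cdot)$ into the sample analog $\tfrac1n\sum_i\bigl(D_i\tfrac{\pi}{\pi^{\ast}}+(1-D_i)\tfrac{1-\pi}{1-\pi^{\ast}}\bigr)(\hat\beta_1(X_i)-\hat\beta_0(X_i))$, linearize in the first-step error, and collect the $O_p(n^{-1/2})$ term; this reproduces the same $\delta$. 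The main obstacle is getting the Riesz representer right --- in particular the factors $1/\pi^{\ast}(X)$ and $1/(1-\pi^{\ast}(X))$, which appear precisely because $\beta_1^{\ast}$ and $\beta_0^{\ast}$ are conditional expectations on the $\{D=1\}$ and $\{D=0\}$ strata rather than on the whole sample --- and then verifying the orthogonality conditions that rule out any further contribution (e.g.\ from the distribution of $X$, or from $\pi^{\ast}(\cdot)$, which is not among the estimated nuisances).
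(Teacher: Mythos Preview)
Your argument is correct and sits in the same \citet{Newey_1994} pathwise-derivative framework the paper uses, but the mechanics differ. The paper exploits the least-squares orthogonality characterization of $h_d(X,\theta)=E_\theta^{\ast}[Y\mid D=d,X]$: it picks the test functions $\tilde h_1=w(X)/\pi^{\ast}(X,\theta)$ and $\tilde h_2=-w(X)/(1-\pi^{\ast}(X,\theta))$, uses the resulting identity to rewrite $E_\theta^{\ast}[w(X)(h_1(X,\theta)-h_2(X,\theta))]$ as an expectation involving $Y$ directly, differentiates in $\theta$ via the chain rule, and then reads off the representer from the term multiplying the score (checking along the way that the $\dot\pi^{\ast}(X)$ contribution vanishes). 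You instead decompose the full score into its $h^{\ast}(y\mid x,D=1)$, $h^{\ast}(y\mid x,D=0)$, $\Pr^{\ast}(D\mid X)$, and $g^{\ast}(x)$ components, compute the pathwise derivative of each $\beta_{d,t}^{\ast}$ from first principles, and match against the relevant score component to obtain $\delta_d$ directly, then verify orthogonality to the remaining components. Your route makes the origin of the $1/\pi^{\ast}(X)$ and $1/(1-\pi^{\ast}(X))$ factors more transparent (they come from the stratum probabilities when lifting a within-stratum covariance to the full sample), while the paper's device folds the orthogonality checks into a single identity and avoids having to argue separately that perturbations of $\pi^{\ast}(\cdot)$ and $g^{\ast}(\cdot)$ contribute nothing.
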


\begin{lemma}
	\label{lem:ATE-PS-adjust}The adjustment for the estimation of $\pi^{\ast}(X)$
	in (\ref{ATE-PS-estimator}) is
	\begin{align}
		&  \left(  D\frac{\pi}{\pi^{\ast}}+(1-D)\frac{1-\pi}{1-\pi^{\ast}}\right)
		(\beta(X)-\delta^{\ast}(\pi^{\ast}(X)))\nonumber\\
		&  -\frac{\pi^{\ast}(X)\frac{\pi}{\pi^{\ast}}+(1-\pi^{\ast}(X))\frac{1-\pi
			}{1-\pi^{\ast}}}{\pi^{\ast}(X)}D(\beta_{1}(X)-\delta_{1}^{\ast}(\pi^{\ast
		}(X)))\nonumber\\
		&  +\frac{\pi^{\ast}(X)\frac{\pi}{\pi^{\ast}}+(1-\pi^{\ast}(X))\frac{1-\pi
			}{1-\pi^{\ast}}}{1-\pi^{\ast}(X)}(1-D)(\beta_{0}(X)-\delta_{0}^{\ast}%
		(\pi^{\ast}(X))). \label{ATE-PS-pi}%
	\end{align}
	
\end{lemma}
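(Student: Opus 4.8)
The plan is to view $\hat\beta$ as a semiparametric GMM estimator whose only non-trivial moment is the last equation of (\ref{ATE-PS-estimator})---call it $m_4$---with first-step nonparametric estimators $\hat\pi(\cdot)$, $\hat\delta_1(\cdot)$, $\hat\delta_0(\cdot)$, and to apply \citet{Newey_1994}. Since $\partial m_4/\partial\beta=-1$, the influence function of $\hat\beta$ is $m_4$ evaluated at the truth plus one correction term per nuisance, and this lemma records the correction due to $\hat\pi(\cdot)$. The crucial point is that $\hat\pi(\cdot)$ is a nonparametric regression of $D$ on $X$, so its correction is of the regression form $\phi(X)(D-\pi^{\ast}(X))$, where $\phi$ is determined by the pathwise derivative
\[
\frac{d}{dt}E^{\ast}\!\left[m_4\bigl(W;\beta_0,\ \pi^{\ast}+th,\ \delta^{\ast}_{\pi^{\ast}+th}\bigr)\right]\Big|_{t=0}=E^{\ast}[\phi(X)h(X)],
\]
with $\delta^{\ast}_{\eta}$ the probability limit of $(\hat\delta_1,\hat\delta_0)$ when the first step equals $\eta$. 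One must track that $\hat\pi$ enters $m_4$ through two channels: as the argument of $\delta^{\ast}_d(\cdot)$, and through the definition of $\delta^{\ast}_d$ itself, because $\hat\delta_d$ regresses $Y$ on $\hat\pi(X)$ within $\{D=d\}$; both are captured by differentiating the composite object $\delta^{\ast}_d(\eta;\eta(X))$ along $\eta_t=\pi^{\ast}+th$.

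To compute this derivative I would use the representation $\delta^{\ast}_d(\eta;p)=E^{\ast}[\beta_d(X)\mid D=d,\eta(X)=p]$, valid by iterated expectations since $\eta(X)$ is $\sigma(X)$-measurable and $\beta_d(X)=E^{\ast}[Y\mid D=d,X]$ does not depend on $\eta$, together with the balancing property $D\perp X\mid\pi^{\ast}(X)$---which holds in the sample by the Bayes-rule argument behind (\ref{pscore})---so that at $\eta=\pi^{\ast}$ one has $\delta^{\ast}_d(\pi^{\ast}(X))=E^{\ast}[\beta_d(X)\mid\pi^{\ast}(X)]$. The pathwise derivative of $p\mapsto\delta^{\ast}_d(\eta_t;\eta_t(X))$ is then pinned down by differentiating the defining identity $E^{\ast}[\mathbf 1\{D=d\}\psi(\eta_t(X))(\beta_d(X)-\delta^{\ast}_d(\eta_t;\eta_t(X)))]=0$ in $t$ for an arbitrary test function $\psi$ (an integration-by-parts step in $p$). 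A useful simplification is that, after splitting $m_4$'s weight into its $\mathbf 1\{D=1\}$ and $\mathbf 1\{D=0\}$ parts, the ``own'' pieces $E^{\ast}[\mathbf 1\{D=d\}\delta^{\ast}_d(\eta;\eta(X))]=E^{\ast}[\mathbf 1\{D=d\}\beta_d(X)]$ are constant in $t$ by iterated expectations and drop out, so only the two cross pieces $E^{\ast}[\mathbf 1\{D=1\}\delta^{\ast}_0(\eta;\eta(X))]$ and $E^{\ast}[\mathbf 1\{D=0\}\delta^{\ast}_1(\eta;\eta(X))]$ actually need to be differentiated.

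Collecting the surviving terms yields the correction $\phi(X)(D-\pi^{\ast}(X))$ with
\[
\begin{aligned}
\phi(X)={}&\Bigl(\tfrac{\pi}{\pi^{\ast}}-\tfrac{1-\pi}{1-\pi^{\ast}}\Bigr)\bigl(\beta(X)-\delta^{\ast}(\pi^{\ast}(X))\bigr)\\
&{}-\tfrac{w(X)}{\pi^{\ast}(X)}\bigl(\beta_1(X)-\delta^{\ast}_1(\pi^{\ast}(X))\bigr)-\tfrac{w(X)}{1-\pi^{\ast}(X)}\bigl(\beta_0(X)-\delta^{\ast}_0(\pi^{\ast}(X))\bigr),
\end{aligned}
\]
with $w(X)$, $\beta(X)=\beta_1(X)-\beta_0(X)$ and $\delta^{\ast}(\pi^{\ast}(X))=\delta^{\ast}_1(\pi^{\ast}(X))-\delta^{\ast}_0(\pi^{\ast}(X))$ as in the text. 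To reconcile this with the stated form (\ref{ATE-PS-pi}) I would substitute $D=\pi^{\ast}(X)+(D-\pi^{\ast}(X))$, $1-D=(1-\pi^{\ast}(X))-(D-\pi^{\ast}(X))$, and $D\frac{\pi}{\pi^{\ast}}+(1-D)\frac{1-\pi}{1-\pi^{\ast}}=w(X)+(D-\pi^{\ast}(X))(\frac{\pi}{\pi^{\ast}}-\frac{1-\pi}{1-\pi^{\ast}})$ into (\ref{ATE-PS-pi}); the terms not multiplying $(D-\pi^{\ast}(X))$ telescope to zero once $\delta^{\ast}_d(\pi^{\ast}(X))=E^{\ast}[\beta_d(X)\mid\pi^{\ast}(X)]$ is used, and the remainder is exactly $\phi(X)(D-\pi^{\ast}(X))$---elementary algebra.

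The main obstacle is the pathwise differentiation of the nested object $\delta^{\ast}_d(\eta;\eta(X))$: the propensity score appears simultaneously as the conditioning variable and as the evaluation point, so one is differentiating a conditional expectation with respect to its own conditioning variable, which forces the integration-by-parts argument and careful bookkeeping of the density of $\pi^{\ast}(X)$. The reward is that all of those density-derivative terms cancel, leaving the clean regression-type correction; making that cancellation precise---equivalently, checking that the $D$-free part of (\ref{ATE-PS-pi}) vanishes---is the step that most needs care.
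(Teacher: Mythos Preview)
Your proposal is correct and lands on exactly the same intermediate expression as the paper: the adjustment in the form $\phi(X)(D-\pi^{\ast}(X))$ with your $\phi$ is precisely what the paper calls ``(HR-all),'' and your algebraic check that (\ref{ATE-PS-pi}) collapses to $\phi(X)(D-\pi^{\ast}(X))$ mirrors the paper's rewriting step in the other direction. The route, however, is genuinely different. The paper does not carry out the pathwise-derivative calculation you sketch; instead it invokes Theorem~7 of \cite{HR_2013}, a ready-made formula for the influence-function correction when a generated regressor (here $\hat\pi(\cdot)$) is plugged into a second-step conditional mean. That theorem delivers three pieces---their (HR1), (HR2), (HR3)---which, after noting (HR1)$=0$, sum to your $\phi(X)$. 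Your approach is more self-contained and makes transparent \emph{why} the correction has the regression form $\phi(X)(D-\pi^{\ast}(X))$ and why the ``own'' pieces $E^{\ast}[\mathbf 1\{D=d\}\delta^{\ast}_d(\eta;\eta(X))]$ are invariant to $\eta$; the paper's approach is shorter because the hard work (differentiating a conditional expectation with respect to its own conditioning variable) is already packaged in the cited theorem. One small overstatement in your write-up: the ``$D$-free'' part of (\ref{ATE-PS-pi}) cancels by pure algebra---you do not need the identity $\delta^{\ast}_d(\pi^{\ast}(X))=E^{\ast}[\beta_d(X)\mid\pi^{\ast}(X)]$ for that step, since $w(X)\bigl[(\beta-\delta^{\ast})-(\beta_1-\delta^{\ast}_1)+(\beta_0-\delta^{\ast}_0)\bigr]\equiv 0$.
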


\begin{lemma}
	\label{lem:ATT-adjust}The adjustment for the estimation of $\beta_{0}(X)$ in
	$\frac{1}{n}\sum_{i=1}^{n}D_{i}(Y_{i}-\hat{\beta}_{0}(X_{i}))$ is
	\[
	-\frac{\pi^{\ast}(X)}{1-\pi^{\ast}(X)}(1-D)(Y-\beta_{0}(X)).
	\]
	
\end{lemma}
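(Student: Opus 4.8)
The plan is to apply the first-step-correction calculus of \citet{Newey_1994} to the second factor in (\ref{ATT-estimator}). Write $W=(Y,D,X)$ and view $n^{-1}\sum_{i=1}^{n}D_{i}(Y_{i}-\hat{\beta}_{0}(X_{i}))$ as $n^{-1}\sum_{i=1}^{n}m(W_{i},\hat{\beta}_{0})$ with $m(w,h)=d\,(y-h(x))$, where $\hat{\beta}_{0}$ estimates $\beta_{0}^{\ast}(x)\equiv E^{\ast}[Y\mid D=0,X=x]$ by a nonparametric regression of $Y$ on $X$ over the subsample $\{D=0\}$. A first-order functional expansion gives
\[
\frac{1}{n}\sum_{i=1}^{n}D_{i}(Y_{i}-\hat{\beta}_{0}(X_{i}))=\frac{1}{n}\sum_{i=1}^{n}D_{i}(Y_{i}-\beta_{0}^{\ast}(X_{i}))-\frac{1}{n}\sum_{i=1}^{n}D_{i}\bigl(\hat{\beta}_{0}(X_{i})-\beta_{0}^{\ast}(X_{i})\bigr)+o_{p}(n^{-1/2}),
\]
so the required adjustment is (minus) the asymptotically linear representation of the last average.

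To obtain that representation, first I would replace $D_{i}$ by $\pi^{\ast}(X_{i})$: since $\hat{\beta}_{0}-\beta_{0}^{\ast}\to0$ uniformly and $n^{-1}\sum_{i}(D_{i}-\pi^{\ast}(X_{i}))\phi(X_{i})=O_{p}(n^{-1/2})$ for fixed $\phi$, a stochastic-equicontinuity argument yields $n^{-1}\sum_{i}D_{i}(\hat{\beta}_{0}(X_{i})-\beta_{0}^{\ast}(X_{i}))=n^{-1}\sum_{i}\pi^{\ast}(X_{i})(\hat{\beta}_{0}(X_{i})-\beta_{0}^{\ast}(X_{i}))+o_{p}(n^{-1/2})$. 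Then I would linearize the regression: letting $q(x)=(1-\pi^{\ast}(x))g^{\ast}(x)$ denote the subdensity of $X$ on $\{D=0\}$, one has $\hat{\beta}_{0}(X_{i})-\beta_{0}^{\ast}(X_{i})=q(X_{i})^{-1}n^{-1}\sum_{j}(1-D_{j})\,\kappa(X_{j},X_{i})\,(Y_{j}-\beta_{0}^{\ast}(X_{j}))$ up to smoothing bias and higher-order remainders, for the relevant smoothing weights $\kappa$. Substituting makes $n^{-1}\sum_{i}\pi^{\ast}(X_{i})(\hat{\beta}_{0}(X_{i})-\beta_{0}^{\ast}(X_{i}))$ a V-statistic; taking its Hájek projection onto observation $j$ and using $\int\pi^{\ast}(x)\kappa(X_{j},x)g^{\ast}(x)\,dx/q(X_{j})\to\pi^{\ast}(X_{j})/(1-\pi^{\ast}(X_{j}))$ delivers
\[
\frac{1}{n}\sum_{i=1}^{n}\pi^{\ast}(X_{i})\bigl(\hat{\beta}_{0}(X_{i})-\beta_{0}^{\ast}(X_{i})\bigr)=\frac{1}{n}\sum_{j=1}^{n}\frac{\pi^{\ast}(X_{j})}{1-\pi^{\ast}(X_{j})}(1-D_{j})\bigl(Y_{j}-\beta_{0}^{\ast}(X_{j})\bigr)+o_{p}(n^{-1/2}).
\]
Carrying the minus sign from the expansion, and noting $\beta_{0}^{\ast}(X)=\beta_{0}(X)=E[Y\mid D=0,X]$ because the conditional law of $Y$ given $(D,X)$ is common to population and sample by (\ref{Ydensity}), gives the claimed adjustment $-\frac{\pi^{\ast}(X)}{1-\pi^{\ast}(X)}(1-D)(Y-\beta_{0}(X))$.

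Equivalently, and more compactly, I would read the answer off \citet{Newey_1994}'s adjustment formula for a conditional-mean first step: with selection indicator $s=1-D$, selection probability $\Pr^{\ast}(s=1\mid X)=1-\pi^{\ast}(X)$, pathwise-derivative weight $E^{\ast}[\partial m(W,\beta_{0}^{\ast})/\partial h\mid X]=E^{\ast}[-D\mid X]=-\pi^{\ast}(X)$, regressand $Y$, and fitted value $\beta_{0}^{\ast}(X)$, the adjustment term is $\frac{s}{\Pr^{\ast}(s=1\mid X)}\bigl(E^{\ast}[\partial m/\partial h\mid X]\bigr)(Y-\beta_{0}^{\ast}(X))=-\frac{\pi^{\ast}(X)}{1-\pi^{\ast}(X)}(1-D)(Y-\beta_{0}(X))$, matching the statement (and parallel to Lemma \ref{lem:ATE-adjust}). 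The only step needing genuine care is the first-step linearization — controlling the smoothing bias and the V-statistic (or series) remainder so that both are $o_{p}(n^{-1/2})$ under the relevant bandwidth/dimension conditions — but this is precisely the regularity machinery behind \citet{Newey_1994}, which I would invoke rather than redevelop; everything else is bookkeeping.
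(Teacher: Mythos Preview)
Your proposal is correct and lands on the same adjustment. The paper's route is close to your ``equivalently'' paragraph---both invoke \citet{Newey_1994}---but the paper carries out the pathwise-derivative calculation explicitly rather than reading off a formula: it isolates $E^{\ast}[\pi^{\ast}(X)h_{2}(X)]$, introduces a parametric path $\theta$, exploits the orthogonality characterization of $h_{2}(X,\theta)=E_{\theta}^{\ast}[Y\mid D=0,X]$ with the specific test function $\tilde h_{2}(X,\theta)=\pi^{\ast}(X)/(1-\pi^{\ast}(X,\theta))$ to obtain the identity $E_{\theta}^{\ast}\!\bigl[\tfrac{\pi^{\ast}(X)(1-D)}{1-\pi^{\ast}(X,\theta)}Y\bigr]=E_{\theta}^{\ast}[\pi^{\ast}(X)h_{2}(X,\theta)]$, differentiates at $\theta=0$, and reads off the adjustment as the coefficient on the score. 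Your first route---replacing $D$ by $\pi^{\ast}(X)$ via stochastic equicontinuity, kernel-linearizing $\hat\beta_{0}$, and H\'ajek-projecting the resulting V-statistic---is a genuinely different, more hands-on derivation that bypasses the pathwise machinery at the price of explicit smoothing-remainder bookkeeping (one small slip: the factor inside your limiting integral should be $q(x)^{-1}$, not $q(X_{j})^{-1}$, though the limit $\pi^{\ast}(X_{j})/(1-\pi^{\ast}(X_{j}))$ is unchanged). Neither approach is sharper; the paper's score calculus is cleaner and estimator-agnostic, while your V-statistic argument is more concrete and makes the dependence on regularity conditions visible.
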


\begin{lemma}
	\label{lem:ATT-PS-adjust}The adjustment for the estimation of $\pi^{\ast}(X)$
	in $\frac{1}{n}\sum_{i=1}^{n}D_{i}(Y_{i}-\hat{\delta}_{0}(\hat{\pi}(X_{i})))$
	is
	\[
	-\frac{1}{1-\pi^{\ast}(X)}(\beta_{0}(X)-\delta_{0}^{\ast}(\pi^{\ast
	}(X)))(D-\pi^{\ast}(X)).
	\]
	
\end{lemma}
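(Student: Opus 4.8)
The quantity $n^{-1}\sum_{i=1}^{n}D_{i}(Y_{i}-\hat{\delta}_{0}(\hat{\pi}(X_{i})))$ in (\ref{ATT-PS-estimator}) is a sequential semiparametric estimator of $\theta_{0}\equiv E^{\ast}[D(Y-\delta_{0}^{\ast}(\pi^{\ast}(X)))]$, whose first step $\hat{\pi}(\cdot)$ nonparametrically regresses $D$ on $X$ and whose second step $\hat{\delta}_{0}(\cdot)$ nonparametrically regresses $Y$ on $\hat{\pi}(X)$ among the controls. By \citet{Newey_1994}, its influence function is $D(Y-\delta_{0}^{\ast}(\pi^{\ast}(X)))-\theta_{0}$ plus one correction per first step: the correction for $\hat{\delta}_{0}$ is the obvious counterpart of Lemma~\ref{lem:ATT-adjust} (with $\beta_{0}(X)$ replaced by $\delta_{0}^{\ast}(\pi^{\ast}(X))$ and conditioning on $\pi^{\ast}(X)$ rather than $X$), and Lemma~\ref{lem:ATT-PS-adjust} records the correction for $\hat{\pi}$. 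Since $\hat{\pi}$ estimates the full-sample conditional mean $\pi^{\ast}(x)=E^{\ast}[D\mid X=x]$, that correction necessarily has the form $\alpha(X)(D-\pi^{\ast}(X))$, where the weight $\alpha$ is pinned down by the pathwise derivative of the plug-in functional: writing $\theta(p)\equiv E^{\ast}[D(Y-\delta_{0,p}^{\ast}(p(X)))]$ with $\delta_{0,p}^{\ast}(v)\equiv E^{\ast}[Y\mid D=0,\,p(X)=v]$ for a candidate propensity function $p$, one needs $\theta'(\pi^{\ast})[\Delta]=E^{\ast}[\alpha(X)\Delta(X)]$. The plan is therefore: (i) put $\theta(p)$ in a tractable form; (ii) differentiate along $p_{t}=\pi^{\ast}+t\Delta$; (iii) read off $\alpha$, and hence the correction.

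For (i), since $\delta_{0,p}^{\ast}(p(X))$ is $\sigma(p(X))$-measurable, iterating expectations gives $E^{\ast}[D\,\delta_{0,p}^{\ast}(p(X))]=E^{\ast}[\rho_{p}(p(X))\,\delta_{0,p}^{\ast}(p(X))]$ with $\rho_{p}(v)\equiv E^{\ast}[D\mid p(X)=v]$, and since $\delta_{0,p}^{\ast}(p(X))=E^{\ast}[(1-D)Y\mid p(X)]/(1-\rho_{p}(p(X)))$, iterating once more yields $\theta(p)=E^{\ast}[DY]-E^{\ast}[\phi(\rho_{p}(p(X)))(1-D)Y]$ with $\phi(v)\equiv v/(1-v)$. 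At $p=\pi^{\ast}$ one has $\rho_{\pi^{\ast}}(v)=v$, so $\theta(\pi^{\ast})$ collapses to the familiar inverse-propensity expression for the ATT numerator.

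For (ii)--(iii), the chain rule gives $\tfrac{d}{dt}\phi(\rho_{p_{t}}(p_{t}(X)))\big|_{t=0}=\phi'(\pi^{\ast}(X))\big(\dot{\rho}(\pi^{\ast}(X))+\Delta(X)\big)$, where $\dot{\rho}$ is the $t$-derivative of the \emph{function} $v\mapsto E^{\ast}[D\mid p_{t}(X)=v]$. Conditioning $(1-D)Y$ on $X$, the ``argument'' part $\phi'(\pi^{\ast}(X))\Delta(X)$ contributes $E^{\ast}[\beta_{0}(X)\Delta(X)/(1-\pi^{\ast}(X))]$; conditioning instead on $\pi^{\ast}(X)$, the ``function'' part contributes $E^{\ast}[\delta_{0}^{\ast}(\pi^{\ast}(X))\dot{\rho}(\pi^{\ast}(X))/(1-\pi^{\ast}(X))]$. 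To handle $\dot{\rho}$ without wrestling with the shifting level sets of $p_{t}$, I would compute $\tfrac{d}{dt}E^{\ast}[\psi(p_{t}(X))D]\big|_{t=0}$ in two ways---directly, and through $E^{\ast}[\psi(p_{t}(X))D]=E^{\ast}[\psi(p_{t}(X))\rho_{p_{t}}(p_{t}(X))]$---and subtract, which yields $E^{\ast}[\psi(\pi^{\ast}(X))\dot{\rho}(\pi^{\ast}(X))]=-E^{\ast}[\psi(\pi^{\ast}(X))\Delta(X)]$ for every smooth $\psi$. With $\psi(v)=\delta_{0}^{\ast}(v)/(1-v)$ the ``function'' part becomes $-E^{\ast}[\delta_{0}^{\ast}(\pi^{\ast}(X))\Delta(X)/(1-\pi^{\ast}(X))]$, so $\tfrac{d}{dt}E^{\ast}[\phi(\rho_{p_{t}}(p_{t}(X)))(1-D)Y]\big|_{t=0}=E^{\ast}[(\beta_{0}(X)-\delta_{0}^{\ast}(\pi^{\ast}(X)))\Delta(X)/(1-\pi^{\ast}(X))]$, and hence $\theta'(\pi^{\ast})[\Delta]=-E^{\ast}[(\beta_{0}(X)-\delta_{0}^{\ast}(\pi^{\ast}(X)))\Delta(X)/(1-\pi^{\ast}(X))]$. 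Therefore $\alpha(X)=-(\beta_{0}(X)-\delta_{0}^{\ast}(\pi^{\ast}(X)))/(1-\pi^{\ast}(X))$ and the correction for $\hat{\pi}$ equals $\alpha(X)(D-\pi^{\ast}(X))$, which is the claimed expression. As a check, combining this correction, the $\hat{\delta}_{0}$-correction and the demeaned moment, dividing by $\pi^{\ast}$, and subtracting $(D-\pi^{\ast})\gamma$ for the denominator $n^{-1}\sum_{i}D_{i}$ reproduces the variance in Proposition~\ref{prop:ATT} after routine algebra.

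The main obstacle is the generated-regressor feature in step (ii): perturbing $\pi^{\ast}$ changes both the value of the conditioning index and the shape of the regression function $\delta_{0}^{\ast}(\cdot)$, and it is exactly the ``function'' channel that produces the projection residual $\beta_{0}(X)-\delta_{0}^{\ast}(\pi^{\ast}(X))$ rather than a $\delta_{0}^{\ast\prime}$ term. The two-ways device avoids the explicit level-set calculus, but one still has to justify differentiating under $E^{\ast}$ and the first-order expansion of $\hat{\pi}$ under the usual regularity---overlap $0<\pi^{\ast}(x)<1$, smoothness of $\pi^{\ast}$, $\delta_{0}^{\ast}$, and the density of $\pi^{\ast}(X)$, and boundedness of $\alpha$ so that it is an admissible \citet{Newey_1994} correction weight---or, alternatively, import this part from the generated-regressor influence-function calculations of \citet{HR_2013}.
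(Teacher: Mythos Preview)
Your argument is correct and arrives at the right correction term, but it proceeds quite differently from the paper. The paper's proof is a one-line application of Theorem~7 in \citet{HR_2013}: it writes $h(D,Y,\mu_{2})=D(Y-\mu_{2}(\pi^{\ast}(X)))$, computes $\kappa_{2}(v)=E^{\ast}[\partial h/\partial\mu_{2}\mid\pi^{\ast}(X)=v]=-v$ and $\partial\kappa_{2}/\partial v=-1$, plugs into the three generated-regressor terms of that theorem, and sums them (the first vanishes, the second gives $-(\beta_{0}(X)-\delta_{0}^{\ast}(\pi^{\ast}(X)))$, the third gives $-\tfrac{\pi^{\ast}(X)}{1-\pi^{\ast}(X)}(\beta_{0}(X)-\delta_{0}^{\ast}(\pi^{\ast}(X)))$). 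You instead carry out a first-principles \citet{Newey_1994} pathwise-derivative calculation: you rewrite the plug-in functional as $\theta(p)=E^{\ast}[DY]-E^{\ast}[\phi(\rho_{p}(p(X)))(1-D)Y]$, then split the derivative along $p_{t}=\pi^{\ast}+t\Delta$ into an ``argument'' piece and a ``function'' piece, and handle the latter with your two-ways identity $E^{\ast}[\psi(\pi^{\ast}(X))\dot{\rho}(\pi^{\ast}(X))]=-E^{\ast}[\psi(\pi^{\ast}(X))\Delta(X)]$ (which follows because $\rho_{\pi^{\ast}}(v)=v$, so $\rho_{\pi^{\ast}}'(v)=1$). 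This is a self-contained re-derivation, for this particular moment, of exactly what the \citet{HR_2013} machinery delivers in general; it avoids quoting their theorem at the cost of the explicit level-set bookkeeping you flag at the end. Either route is fine here, and your closing consistency check against Proposition~\ref{prop:ATT} is a nice confirmation that the argument- and function-channel contributions have been combined with the correct signs.
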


\subsection{Proofs of the Lemmas}

\begin{proof}
	[Proof of Lemma \ref{lem:ATE-adjust}]By (\ref{ATE-estimator}) we can write the
	ATE as a linear functional in the conditional means of $Y$ given $X$ for the
	treated and controls
	\begin{align*}
		\beta &  =\mathbb{E}^{\ast}\left[  \left(  D\frac{\pi}{\pi^{\ast}}%
		+(1-D)\frac{1-\pi}{1-\pi^{\ast}}\right)  (h_{1}(X)-h_{2}(X))\right] \\
		&  =\mathbb{E}^{\ast}\left[  \left(  \pi^{\ast}(X)\frac{\pi}{\pi^{\ast}%
		}+(1-\pi^{\ast}(X))\frac{1-\pi}{1-\pi^{\ast}}\right)  (h_{1}(X)-h_{2}%
		(X))\right] \\
		&  =\mathbb{E}^{\ast}[w(X)(h_{1}(X)-h_{2}(X))].
	\end{align*}
	In this expression $h_{1}(X)\equiv\mathbb{E}^{\ast}[Y|D=1,X]=\beta_{1}(X)$,
	$h_{2}(X)\equiv\mathbb{E}^{\ast}[Y|D=0,X]=\beta_{0}(X)$, and $w(X)\equiv
	\pi^{\ast}(X)\frac{\pi}{\pi^{\ast}}+(1-\pi^{\ast}(X))\frac{1-\pi}{1-\pi^{\ast
	}}$. Therefore the ATE satisfies the moment equation $\mathbb{E}^{\ast
	}[m(X,\beta,h_{1},h_{2})]=0$, where
	\[
	m(X,\beta,h_{1},h_{2})=w(X)(h_{1}(X)-h_{2}(X))-\beta=D(X)^{\prime}h(X)-\beta,
	\]
	with $D(X)\equiv w(X)(1,-1)^{\prime}$ and $h(X)\equiv(h_{1}(X),h_{2}%
	(X))^{\prime}$. We use \citet[equation 4.1]{Newey_1994}'s notation $D(X)$ for the functional derivative which is not the treatment
	dummy $D$.
	
	Following \citet{Newey_1994}, define a path indexed by the scalar parameter
	$\theta$ for the distribution of $(Y,D,X)$ with density $f^{\ast}(\cdot
	,\theta)$, where $f^{\ast}(\cdot,0)=f^{\ast}(\cdot)$ is the density of the
	distribution of $(Y,D,X)$ in the stratified sample. If $\mathbb{E}_{\theta
	}^{\ast}$ denotes an expectation with respect to the distribution with density
	$f^{\ast}(\cdot,\theta)$, then we define the corresponding paths for the
	projections $h_{1}(X,\theta)\equiv\mathbb{E}_{\theta}^{\ast}[Y|D=1,X]$ and
	$h_{2}(X,\theta)\equiv\mathbb{E}_{\theta}^{\ast}[Y|D=0,X]$. The paths
	$h_{1}(X,\theta)$ and $h_{2}(X,\theta)$ are the minimizers of the objective
	functions
	\[
	\mathbb{E}_{\theta}^{\ast}[D(Y-\tilde{h}_{1}(X,\theta))^{2}]\quad
	\text{and}\quad\mathbb{E}_{\theta}^{\ast}[(1-D)(Y-\tilde{h}_{2}(X,\theta
	))^{2}],
	\]
	respectively, so the following orthogonality conditions hold for all functions
	$\tilde{h}_{1}(X,\theta)$ and $\tilde{h}_{2}(X,\theta)$:
	\[
	\mathbb{E}_{\theta}^{\ast}[D(Y-h_{1}(X,\theta))\tilde{h}_{1}(X,\theta
	)]=0\quad\text{and}\quad\mathbb{E}_{\theta}^{\ast}[(1-D)(Y-h_{2}
	(X,\theta))\tilde{h}_{2}(X,\theta)]=0.
	\]
	We sum up the orthogonality conditions
	\[
	\mathbb{E}_{\theta}^{\ast}[D(Y-h_{1}(X,\theta))\tilde{h}_{1}(X,\theta
	)+(1-D)(Y-h_{2}(X,\theta))\tilde{h}_{2}(X,\theta)]=0
	\]
	for all functions $(\tilde{h}_{1}(X,\theta),\tilde{h}_{2}(X,\theta))$. Choose
	$(\tilde{h}_{1}(X,\theta),\tilde{h}_{2}(X,\theta))=\left(  \frac{w(X)}
	{\pi^{\ast}(X,\theta)},-\frac{w(X)}{1-\pi^{\ast}(X,\theta)}\right)  $, where
	$\pi^{\ast}(X,\theta)\equiv\mathbb{E}_{\theta}^{\ast}[D|X]$. We obtain
	\begin{equation}
		\mathbb{E}_{\theta}^{\ast}\left[  w(X)\left(  \frac{D}{\pi^{\ast}(X,\theta
			)}(Y-h_{1}(X,\theta))-\frac{(1-D)}{1-\pi^{\ast}(X,\theta)}(Y-h_{2}
		(X,\theta))\right)  \right]  =0, \label{key1}%
	\end{equation}
	which is equivalent to
	\begin{align}
		&  \mathbb{E}_{\theta}^{\ast}\left[  w(X)\left(  \frac{D}{\pi^{\ast}
			(X,\theta)}Y-\frac{1-D}{1-\pi^{\ast}(X,\theta)}Y\right)  \right] \nonumber\\
		&  =\mathbb{E}_{\theta}^{\ast}\left[  w(X)\left(  \frac{D}{\pi^{\ast}
			(X,\theta)}h_{1}(X,\theta)-\frac{1-D}{1-\pi^{\ast}(X,\theta)}h_{2}
		(X,\theta)\right)  \right] \nonumber\\
		&  =\mathbb{E}_{\theta}^{\ast}[w(X)(h_{1}(X,\theta)-h_{2}(X,\theta))].
		\label{key2}%
	\end{align}
	As in \citet[equation 4.5]{Newey_1994} we compute the total derivative of (\ref{key2}) at $\theta=0$
	\begin{align}
		\frac{\partial\mathbb{E}_{\theta}^{\ast}[w(X)(h_{1}(X,\theta)-h_{2}
			(X,\theta))]}{\partial\theta}=  &  \frac{\partial\mathbb{E}_{\theta}^{\ast
			}[w(X)(h_{1}(X)-h_{2}(X))]}{\partial\theta}\nonumber\\
		&  +\frac{\partial\mathbb{E}^{\ast}[w(X)(h_{1}(X,\theta)-h_{2}(X,\theta
			))]}{\partial\theta}, \label{chain-rule}%
	\end{align}
	where we used the fact that the projections at $\theta=0$ are equal to
	$h_{1}(X)$ and $h_{2}(X)$. Therefore combining (\ref{chain-rule}) with
	(\ref{key2}), we obtain
	\begin{align*}
		\frac{\partial\mathbb{E}^{\ast}[D(X)^{\prime}h(X,\theta)]}{\partial\theta}=
		&  \frac{\partial\mathbb{E}^{\ast}[w(X)(h_{1}(X,\theta)-h_{2}(X,\theta
			))]}{\partial\theta}\\
		=  &  \frac{\partial\mathbb{E}_{\theta}^{\ast}[w(X)(h_{1}(X,\theta
			)-h_{2}(X,\theta))]}{\partial\theta}-\frac{\partial\mathbb{E}_{\theta}^{\ast
			}[w(X)(h_{1}(X)-h_{2}(X))]}{\partial\theta}\\
		=  &  \frac{\partial}{\partial\theta}\left(  \mathbb{E}_{\theta}^{\ast}\left[
		w(X)\left(  \frac{D}{\pi^{\ast}(X,\theta)}Y-\frac{1-D}{1-\pi^{\ast}(X,\theta
			)}Y\right)  \right]  \right) \\
		&  -\frac{\partial}{\partial\theta}\left(  \mathbb{E}_{\theta}^{\ast}\left[
		w(X)\left(  \frac{D}{\pi^{\ast}(X,\theta)}h_{1}(X)-\frac{1-D}{1-\pi^{\ast
			}(X,\theta)}h_{2}(X)\right)  \right]  \right) \\
		=  &  \frac{\partial}{\partial\theta}\left(  \mathbb{E}_{\theta}^{\ast}\left[
		w(X)\left(  \frac{D}{\pi^{\ast}(X,\theta)}\left(  Y-h_{1}(X)\right)
		-\frac{1-D}{1-\pi^{\ast}(X,\theta)}\left(  Y-h_{2}(X)\right)  \right)
		\right]  \right)  .
	\end{align*}
	By taking the total derivative of the right-hand side at $\theta=0$ and noting
	that the second term after the first equality is 0, we derive
	\begin{align*}
		\frac{\partial\mathbb{E}^{\ast}[D(X)^{\prime}h(X,\theta)]}{\partial\theta}=
		&  \mathbb{E}^{\ast}\left[  w(X)\left(  \frac{D}{\pi^{\ast}(X)}(Y-h_{1}
		(X))-\frac{1-D}{1-\pi^{\ast}(X)}(Y-h_{2}(X))\right)  S(Y,D,X)\right] \\
		&  +\mathbb{E}^{\ast}\left[  w(X)\left(  -\frac{\dot{\pi}^{\ast}(X)D}
		{\pi^{\ast}(X)^{2}}(Y-h_{1}(X))-\frac{\dot{\pi}^{\ast}(X)(1-D)}{(1-\pi^{\ast
			}(X))^{2}}(Y-h_{2}(X))\right)  \right] \\
		=  &  \mathbb{E}^{\ast}\left[  w(X)\left(  \frac{D}{\pi^{\ast}(X)}
		(Y-h_{1}(X))-\frac{1-D}{1-\pi^{\ast}(X)}(Y-h_{2}(X))\right)  S(Y,D,X)\right]
		.
	\end{align*}
	In this expression $S(\cdot)\equiv\left.  \left.  \partial\ln f^{\ast}
	(\cdot,\theta)\right/  \partial\theta\right\vert _{\theta=0}$ is the score of
	the density, and $\dot{\pi}^{\ast}(X)\equiv\left.  \left.  \partial\pi^{\ast
	}(X,\theta)\right/  \partial\theta\right\vert _{\theta=0}$. Therefore by \citet[Proposition 4]{Newey_1994} the adjustment to the influence function is
	\[
	w(X)\left(  \frac{D}{\pi^{\ast}(X)}\left(  Y-h_{1}(X)\right)  -\frac
	{1-D}{1-\pi^{\ast}(X)}\left(  Y-h_{2}(X)\right)  \right)  .
	\]
	
\end{proof}

\begin{proof}
	[Proof of Lemma \ref{lem:ATE-PS-adjust}]Adopting \cite{HR_2013}'s notation, we
	let
	\[
	h(D,\mu_{1},\mu_{2})\equiv\left(  D\frac{\pi}{\pi^{\ast}}+(1-D)\frac{1-\pi
	}{1-\pi^{\ast}}\right)  (\mu_{1}(\pi^{\ast}(X))-\mu_{2}(\pi^{\ast}(X))),
	\]
	where $\mu_{1}(v)\equiv E^{\ast}[Y|D=1,\pi^{\ast}(X)=v]=\delta_{1}^{\ast}(v)$
	and $\mu_{2}(v)\equiv E^{\ast}[Y|D=0,\pi^{\ast}(X)=v]=\delta_{0}^{\ast}(v)$.
	Also $E^{\ast}[D|\pi^{\ast}(X)]=\pi^{\ast}(X)$,
	\begin{align*}
		\frac{\partial h(D,\mu_{1},\mu_{2})}{\partial\mu_{1}}  &  =D\frac{\pi}%
		{\pi^{\ast}}+(1-D)\frac{1-\pi}{1-\pi^{\ast}},\\
		\frac{\partial h(D,\mu_{1},\mu_{2})}{\partial\mu_{2}}  &  =-\left(  D\frac
		{\pi}{\pi^{\ast}}+(1-D)\frac{1-\pi}{1-\pi^{\ast}}\right)  ,\\
		\kappa_{1}(v)  &  \equiv E^{\ast}\left[  \left.  \frac{\partial h(D,\mu
			_{1},\mu_{2})}{\partial\mu_{1}}\right\vert \pi^{\ast}(X)=v\right]  =v\frac
		{\pi}{\pi^{\ast}}+(1-v)\frac{1-\pi}{1-\pi^{\ast}},\\
		\kappa_{2}(v)  &  \equiv E^{\ast}\left[  \left.  \frac{\partial h(D,\mu
			_{1},\mu_{2})}{\partial\mu_{2}}\right\vert \pi^{\ast}(X)=v\right]
		=-v\frac{\pi}{\pi^{\ast}}-(1-v)\frac{1-\pi}{1-\pi^{\ast}},\\
		\frac{\partial\kappa_{1}(v)}{\partial v}  &  =\frac{\pi}{\pi^{\ast}}%
		-\frac{1-\pi}{1-\pi^{\ast}},\\
		\frac{\partial\kappa_{2}(v)}{\partial v}  &  =-\frac{\pi}{\pi^{\ast}}%
		+\frac{1-\pi}{1-\pi^{\ast}},
	\end{align*}
	and finally $\pi_{1}(v)\equiv v$, $\pi_{2}(v)\equiv1-v$. Further, we denote
	$\mu_{1}(X)\equiv E^{\ast}[Y|D=1,X]=\beta_{1}(X)$ and $\mu_{2}(X)\equiv
	E^{\ast}[Y|D=0,X]=\beta_{0}(X)$. Using \citet[Theorem 7]{HR_2013}, we can see that the contribution of estimating $\pi^{\ast}(X)$ is the sum of
	the three terms in (\ref{HR1}), (\ref{HR2}), and (\ref{HR3}) times
	$D-\pi^{\ast}(X)$:
	\begin{align}
		E^{\ast}\left[  \left.  \left(  \frac{\partial h(D,\mu_{1},\mu_{2})}%
		{\partial\mu_{1}}-\kappa_{1}(\pi^{\ast}(X))\right)  \frac{\partial\mu_{1}%
			(\pi^{\ast}(X))}{\partial v}\right\vert X\right]   & \nonumber\\
		+E^{\ast}\left[  \left.  \left(  \frac{\partial h(D,\mu_{1},\mu_{2})}%
		{\partial\mu_{2}}-\kappa_{2}(\pi^{\ast}(X))\right)  \frac{\partial\mu_{2}%
			(\pi^{\ast}(X))}{\partial v}\right\vert X\right]   &  =0, \label{HR1}%
	\end{align}%
	\begin{align}
		&  E^{\ast}\left[  \left.  (\mu_{1}(X)-\mu_{1}(\pi^{\ast}(X)))\frac
		{\partial\kappa_{1}(\pi^{\ast}(X))}{\partial v}\right\vert X\right]  +E^{\ast
		}\left[  \left.  (\mu_{2}(X)-\mu_{2}(\pi^{\ast}(X)))\frac{\partial\kappa
			_{2}(\pi^{\ast}(X))}{\partial v}\right\vert X\right] \nonumber\\
		=  &  (\beta_{1}(X)-\delta_{1}^{\ast}(\pi^{\ast}(X)))\left(  \frac{\pi}%
		{\pi^{\ast}}-\frac{1-\pi}{1-\pi^{\ast}}\right)  -(\beta_{0}(X)-\delta
		_{0}^{\ast}(\pi^{\ast}(X)))\left(  \frac{\pi}{\pi^{\ast}}-\frac{1-\pi}%
		{1-\pi^{\ast}}\right) \nonumber\\
		=  &  (\beta(X)-\delta^{\ast}(\pi^{\ast}(X)))\left(  \frac{\pi}{\pi^{\ast}%
		}-\frac{1-\pi}{1-\pi^{\ast}}\right)  , \label{HR2}%
	\end{align}
	and
	\begin{align}
		&  -E^{\ast}\left[  \left.  \frac{1}{\pi_{1}(\pi^{\ast}(X))}(\mu_{1}%
		(X)-\mu_{1}(\pi^{\ast}(X)))\kappa_{1}(\pi^{\ast}(X))\frac{\partial\pi_{1}%
			(\pi^{\ast}(X))}{\partial v}\right\vert X\right] \nonumber\\
		&  -E^{\ast}\left[  \left.  \frac{1}{\pi_{2}(\pi^{\ast}(X))}(\mu_{2}%
		(X)-\mu_{2}(\pi^{\ast}(X)))\kappa_{2}(\pi^{\ast}(X))\frac{\partial\pi_{2}%
			(\pi^{\ast}(X))}{\partial v}\right\vert X\right] \nonumber\\
		=  &  -\frac{1}{\pi^{\ast}(X)}(\beta_{1}(X)-\delta_{1}^{\ast}(\pi^{\ast
		}(X)))\left(  \pi^{\ast}(X)\frac{\pi}{\pi^{\ast}}+(1-\pi^{\ast}(X))\frac
		{1-\pi}{1-\pi^{\ast}}\right) \nonumber\\
		&  -\frac{1}{1-\pi^{\ast}(X)}(\beta_{0}(X)-\delta_{0}^{\ast}(\pi^{\ast
		}(X)))\left(  -\pi^{\ast}(X)\frac{\pi}{\pi^{\ast}}-(1-\pi^{\ast}%
		(X))\frac{1-\pi}{1-\pi^{\ast}}\right)  (-1)\nonumber\\
		=  &  -\frac{1}{\pi^{\ast}(X)}(\beta_{1}(X)-\delta_{1}^{\ast}(\pi^{\ast
		}(X)))\left(  \pi^{\ast}(X)\frac{\pi}{\pi^{\ast}}+(1-\pi^{\ast}(X))\frac
		{1-\pi}{1-\pi^{\ast}}\right) \nonumber\\
		&  -\frac{1}{1-\pi^{\ast}(X)}(\beta_{0}(X)-\delta_{0}^{\ast}(\pi^{\ast
		}(X)))\left(  \pi^{\ast}(X)\frac{\pi}{\pi^{\ast}}+(1-\pi^{\ast}(X))\frac
		{1-\pi}{1-\pi^{\ast}}\right)  . \label{HR3}%
	\end{align}
	Combining them, we obtain
	\begin{align}
		&  (\beta(X)-\delta(\pi^{\ast}(X)))\left(  \frac{\pi}{\pi^{\ast}}-\frac{1-\pi
		}{1-\pi^{\ast}}\right)  (D-\pi^{\ast}(X))\nonumber\\
		&  -\frac{1}{\pi^{\ast}(X)}(\beta_{1}(X)-\delta_{1}(\pi^{\ast}(X)))\left(
		\pi^{\ast}(X)\frac{\pi}{\pi^{\ast}}+(1-\pi^{\ast}(X))\frac{1-\pi}{1-\pi^{\ast
		}}\right)  (D-\pi^{\ast}(X))\nonumber\\
		&  -\frac{1}{1-\pi^{\ast}(X)}(\beta_{0}(X)-\delta_{0}(\pi^{\ast}(X)))\left(
		\pi^{\ast}(X)\frac{\pi}{\pi^{\ast}}+(1-\pi^{\ast}(X))\frac{1-\pi}{1-\pi^{\ast
		}}\right)  (D-\pi^{\ast}(X)) \label{HR-all}%
	\end{align}
	as the adjustment for the estimation of $\pi^{\ast}(X)$. We can rewrite
	(\ref{HR-all}) as
	\begin{align*}
		&  (\beta(X)-\delta^{\ast}(\pi^{\ast}(X)))\left(  D\frac{\pi}{\pi^{\ast}%
		}+(1-D)\frac{1-\pi}{1-\pi^{\ast}}-\left(  \pi^{\ast}(X)\frac{\pi}{\pi^{\ast}%
		}+(1-\pi^{\ast}(X))\frac{1-\pi}{1-\pi^{\ast}}\right)  \right) \\
		&  -(\beta_{1}(X)-\delta_{1}(\pi^{\ast}(X)))\left(  \pi^{\ast}(X)\frac{\pi
		}{\pi^{\ast}}+(1-\pi^{\ast}(X))\frac{1-\pi}{1-\pi^{\ast}}\right)  \left(
		\frac{D}{\pi^{\ast}(X)}-1\right) \\
		&  -(\beta_{0}(X)-\delta_{0}(\pi^{\ast}(X)))\left(  \pi^{\ast}(X)\frac{\pi
		}{\pi^{\ast}}+(1-\pi^{\ast}(X))\frac{1-\pi}{1-\pi^{\ast}}\right)  \left(
		1-\frac{1-D}{1-\pi^{\ast}(X)}\right) \\
		=  &  \left(  D\frac{\pi}{\pi^{\ast}}+(1-D)\frac{1-\pi}{1-\pi^{\ast}}\right)
		(\beta(X)-\delta^{\ast}(\pi^{\ast}(X)))\\
		&  -\frac{\pi^{\ast}(X)\frac{\pi}{\pi^{\ast}}+(1-\pi^{\ast}(X))\frac{1-\pi
			}{1-\pi^{\ast}}}{\pi^{\ast}(X)}D(\beta_{1}(X)-\delta_{1}(\pi^{\ast}(X)))\\
		&  +\frac{\pi^{\ast}(X)\frac{\pi}{\pi^{\ast}}+(1-\pi^{\ast}(X))\frac{1-\pi
			}{1-\pi^{\ast}}}{1-\pi^{\ast}(X)}(1-D)(\beta_{0}(X)-\delta_{0}(\pi^{\ast
		}(X))),
	\end{align*}
	from which we obtain the conclusion.
\end{proof}

\begin{proof}
	[Proof of Lemma \ref{lem:ATT-adjust}]We follow the proof of Lemma
	\ref{lem:ATE-adjust}, using similar notation. We write the numerator in
	(\ref{ATT-estimator}) as
	\begin{equation}
		\frac{1}{n}\sum_{i=1}^{n}D_{i}(Y_{i}-\hat{\beta}_{0}(X_{i}))=\frac{1}{n}%
		\sum_{i=1}^{n}D_{i}Y_{i}-\frac{1}{n}\sum_{i=1}^{n}D_{i}\hat{\beta}_{0}(X_{i}),
		\label{ATT-numerator}%
	\end{equation}
	and focus on $\frac{1}{n}\sum_{i=1}^{n}D_{i}\hat{\beta}_{0}(X_{i})$, which
	estimates $\beta^{\ast}\equiv\mathbb{E}^{\ast}[\pi^{\ast}(X)h_{2}(X)]$, where
	$h_{2}(X)\equiv\mathbb{E}^{\ast}[Y|D=0,X]=\beta_{0}(X)$. So we consider the
	moment equation $\mathbb{E}^{\ast}[m(X,\beta^{\ast},h_{2})]=0$, where
	\[
	m(X,\beta^{\ast},h_{2})=\pi^{\ast}(X)h_{2}(X)-\beta^{\ast}.
	\]
	In \citet[equation 4.1]{Newey_1994}'s notation, we therefore want to consider $\mathbb{E}^{\ast}[D(X)h_{2}(X)]$
	with $D(X)\equiv\pi^{\ast}(X)$, and $D(X)h_{2}(X)$ is obviously linear in
	$h_{2}$. Recall that the following orthogonality condition holds for all
	functions $\tilde{h}_{2}(X,\theta)$:
	\[
	\mathbb{E}_{\theta}^{\ast}[(1-D)(Y-h_{2}(X,\theta))\tilde{h}_{2}%
	(X,\theta)]=0.
	\]
	Choose $\tilde{h}_{2}(X,\theta)=\frac{\pi^{\ast}(X)}{1-\pi^{\ast}(X,\theta)}$,
	where $\pi^{\ast}(X,\theta)\equiv\mathbb{E}_{\theta}^{\ast}[D|X]$, so that
	\[
	0=\mathbb{E}_{\theta}^{\ast}\left[  \frac{\pi^{\ast}(X)(1-D)}{1-\pi^{\ast
		}(X,\theta)}(Y-h_{2}(X,\theta))\right]
	\]
	or
	\[
	\mathbb{E}_{\theta}^{\ast}\left[  \frac{\pi^{\ast}(X)(1-D)}{1-\pi^{\ast
		}(X,\theta)}Y\right]  =\mathbb{E}_{\theta}^{\ast}\left[  \frac{\pi^{\ast
		}(X)(1-D)}{1-\pi^{\ast}(X,\theta)}h_{2}(X,\theta)\right]  =\mathbb{E}_{\theta
	}^{\ast}[\pi^{\ast}(X)h_{2}(X,\theta)].
	\]
	The last expression is useful to compute the derivative as in \citet[equation 4.5]{Newey_1994}. Taking the total derivative at $\theta=0$%
	\begin{equation}
		\frac{\partial\mathbb{E}_{\theta}^{\ast}[\pi^{\ast}(X)h_{2}(X,\theta
			)]}{\partial\theta}=\frac{\partial\mathbb{E}_{\theta}^{\ast}[\pi^{\ast
			}(X)h_{2}(X)]}{\partial\theta}+\frac{\partial\mathbb{E}^{\ast}[\pi^{\ast
			}(X)h_{2}(X,\theta)]}{\partial\theta},\nonumber
	\end{equation}
	we obtain
	\begin{align*}
		\frac{\partial\mathbb{E}^{\ast}[\pi^{\ast}(X)h_{2}(X,\theta)]}{\partial\theta
		}=  &  \frac{\partial\mathbb{E}_{\theta}^{\ast}[\pi^{\ast}(X)h_{2}(X,\theta
			)]}{\partial\theta}-\frac{\partial\mathbb{E}_{\theta}^{\ast}[\pi^{\ast
			}(X)h_{2}(X)]}{\partial\theta}\\
		=  &  \frac{\partial}{\partial\theta}\mathbb{E}_{\theta}^{\ast}\left[
		\frac{\pi^{\ast}(X)(1-D)}{1-\pi^{\ast}(X,\theta)}Y\right]  -\frac{\partial
		}{\partial\theta}\mathbb{E}_{\theta}^{\ast}\left[  \frac{\pi^{\ast}%
			(X)(1-D)}{1-\pi^{\ast}(X,\theta)}h_{2}(X)\right] \\
		=  &  \frac{\partial}{\partial\theta}\mathbb{E}_{\theta}^{\ast}\left[
		\frac{\pi^{\ast}(X)(1-D)}{1-\pi^{\ast}(X,\theta)}(Y-h_{2}(X))\right] \\
		=  &  \mathbb{E}^{\ast}\left[  \frac{\pi^{\ast}(X)(1-D)}{1-\pi^{\ast}%
			(X)}(Y-h_{2}(X))S(Y,D,X)\right] \\
		&  +\mathbb{E}^{\ast}\left[  \pi^{\ast}(X)(1-D)\frac{\dot{\pi}^{\ast}%
			(X)}{(1-\pi^{\ast}(X))^{2}}(Y-h_{2}(X))\right] \\
		=  &  \mathbb{E}^{\ast}\left[  \frac{\pi^{\ast}(X)(1-D)}{1-\pi^{\ast}%
			(X)}(Y-h_{2}(X))S(Y,D,X)\right]  ,
	\end{align*}
	so the adjustment for $\frac{1}{n}\sum_{i=1}^{n}D_{i}\hat{\beta}_{0}(X_{i})$
	is
	\[
	\frac{\pi^{\ast}(X)(1-D)}{1-\pi^{\ast}(X)}(Y-h_{2}(X))=\frac{\pi^{\ast
		}(X)(1-D)}{1-\pi^{\ast}(X)}(Y-\beta_{0}(X)),
	\]
	from which we obtain the conclusion.
\end{proof}

\begin{proof}
	[Proof of Lemma \ref{lem:ATT-PS-adjust}]We adopt \cite{HR_2013}'s notation
	similarly as in Lemma \ref{lem:ATE-PS-adjust}, except that we let
	$h(D,Y,\mu_{2})\equiv D(Y-\mu_{2}(\pi^{\ast}(X)))$ and
	\begin{align*}
		\frac{\partial h(D,Y,\mu_{2})}{\partial\mu_{2}}  &  =-D,\\
		\kappa_{2}(v)  &  \equiv E^{\ast}\left[  \left.  \frac{\partial h(D,Y,\mu
			_{2})}{\partial\mu_{2}}\right\vert \pi^{\ast}(X)=v\right]  =-v,\\
		\frac{\partial\kappa_{2}(v)}{\partial v}  &  =-1.
	\end{align*}
	Applying \citet[Theorem 7]{HR_2013},\footnote{\citet[p.333]{HR_2013} considered an imputation version, not this version of the estimated propensity
		score matching for the ATT.} we can see that the adjustment for the estimation
	of $\pi^{\ast}(X)$ is the sum of the three terms in (\ref{ATT-HR1}),
	(\ref{ATT-HR2}), and (\ref{ATT-HR3}) times $D-\pi^{\ast}(X)$:
	\begin{equation}
		E^{\ast}\left[  \left.  \left(  \frac{\partial h(D,Y,\mu_{2})}{\partial\mu
			_{2}}-\kappa_{2}(\pi^{\ast}(X))\right)  \frac{\partial\mu_{2}(\pi^{\ast}%
			(X))}{\partial v}\right\vert X\right]  =0, \label{ATT-HR1}%
	\end{equation}%
	\begin{equation}
		E^{\ast}\left[  \left.  (\mu_{2}(X)-\mu_{2}(\pi^{\ast}(X)))\frac
		{\partial\kappa_{2}(\pi^{\ast}(X))}{\partial v}\right\vert X\right]
		=-(\beta_{0}(X)-\delta_{0}^{\ast}(\pi^{\ast}(X))), \label{ATT-HR2}%
	\end{equation}
	and
	\begin{align}
		&  -E^{\ast}\left[  \left.  \frac{1}{1-\pi^{\ast}(X)}(\mu_{2}(X)-\mu_{2}%
		(\pi^{\ast}(X)))\kappa_{2}(\pi^{\ast}(X))\frac{\partial\pi_{2}(\pi^{\ast}%
			(X))}{\partial v}\right\vert X\right] \nonumber\\
		&  =-\frac{1}{1-\pi^{\ast}(X)}(\beta_{0}(X)-\delta_{0}^{\ast}(\pi^{\ast
		}(X)))\pi^{\ast}(X). \label{ATT-HR3}%
	\end{align}
	Combining them, we obtain the desired conclusion.
\end{proof}

\subsection{Proofs of the Propositions}

\begin{proof}
	[Proof of Proposition \ref{prop:ATE}]We derive the asymptotic distribution of
	the ATE estimator based on the moments (\ref{ATE-estimator}). The estimator
	for the ATE is
	\[
	\hat{\beta}=\frac{1}{n}\sum_{i=1}^{n}\left(  D_{i}\frac{\pi}{\pi^{\ast}%
	}+(1-D_{i})\frac{1-\pi}{1-\pi^{\ast}}\right)  (\hat{\beta}_{1}(X_{i}%
	)-\hat{\beta}_{0}(X_{i})).
	\]
	Standard arguments (e.g., \citealp{Newey_1994}) can be used to show that the
	influence function of $\hat{\beta}$ is equal to a main term
	\[
	\left(  D\frac{\pi}{\pi^{\ast}}+(1-D)\frac{1-\pi}{1-\pi^{\ast}}\right)
	(\beta_{1}(X)-\beta_{0}(X))-\beta
	\]
	plus the adjustment term for the estimation of $(\beta_{1}(X),\beta_{0}(X))$.
	Lemma \ref{lem:ATE-adjust} shows that this adjustment term is given by
	(\ref{ATE-adjust}), so the influence function is equal to
	\begin{align}
		&  \left(  D\frac{\pi}{\pi^{\ast}}+(1-D)\frac{1-\pi}{1-\pi^{\ast}}\right)
		(\beta_{1}(X)-\beta_{0}(X))-\beta\nonumber\\
		&  +\left(  \pi^{\ast}(X)\frac{\pi}{\pi^{\ast}}+(1-\pi^{\ast}(X))\frac{1-\pi
		}{1-\pi^{\ast}}\right)  \left(  \frac{D}{\pi^{\ast}(X)}(Y-\beta_{1}%
		(X))-\frac{1-D}{1-\pi^{\ast}(X)}(Y-\beta_{0}(X))\right)  .
		\label{matchX-influence}%
	\end{align}
	If $\pi^{\ast}$ is estimated, we need to add the following term to
	(\ref{matchX-influence})
	\begin{align}
		&  E^{\ast}\left[  \frac{\partial}{\partial\pi^{\ast}}\left(  D\frac{\pi}%
		{\pi^{\ast}}+(1-D)\frac{1-\pi}{1-\pi^{\ast}}\right)  (\beta_{1}(X)-\beta
		_{0}(X))\right]  (D-\pi^{\ast})\nonumber\\
		&  =E^{\ast}\left[  \left(  -D\frac{\pi}{\left(  \pi^{\ast}\right)  ^{2}%
		}+(1-D)\frac{1-\pi}{(1-\pi^{\ast})^{2}}\right)  (\beta_{1}(X)-\beta
		_{0}(X))\right]  (D-\pi^{\ast})\nonumber\\
		&  =E^{\ast}\left[  \left(  -\pi^{\ast}(X)\frac{\pi}{(\pi^{\ast})^{2}}%
		+(1-\pi^{\ast}(X))\frac{1-\pi}{(1-\pi^{\ast})^{2}}\right)  (\beta_{1}%
		(X)-\beta_{0}(X))\right]  (D-\pi^{\ast}) \label{matchX-influence-estimatedpi*}%
	\end{align}
	
\end{proof}

\begin{proof}
	[Proof of Proposition \ref{prop:ATE-PS}]\label{pf:propPS}The ATE estimator
	based on the moments (\ref{ATE-PS-estimator}) takes the form
	\[
	\frac{1}{n}\sum_{i=1}^{n}\left(  D_{i}\frac{\pi}{\pi^{\ast}}+(1-D_{i}%
	)\frac{1-\pi}{1-\pi^{\ast}}\right)  (\hat{\delta}_{1}(\hat{\pi}(X_{i}%
	))-\hat{\delta}_{0}(\hat{\pi}(X_{i}))).
	\]
	The influence function of this estimator has as its main term
	\begin{equation}
		\left(  D\frac{\pi}{\pi^{\ast}}+(1-D)\frac{1-\pi}{1-\pi^{\ast}}\right)
		(\delta_{1}^{\ast}(\pi^{\ast}(X))-\delta_{0}^{\ast}(\pi^{\ast}(X)))-\beta,
		\label{ATE-PS-main}%
	\end{equation}
	to which we add the adjustment for the estimation of $(\delta_{1}^{\ast}%
	(\cdot),\delta_{0}^{\ast}(\cdot))$ and the adjustment for the estimation of
	$\pi^{\ast}(\cdot)$. The first adjustment can be derived following Lemma
	\ref{lem:ATE-adjust} as
	\begin{equation}
		\left(  \pi^{\ast}(X)\frac{\pi}{\pi^{\ast}}+(1-\pi^{\ast}(X))\frac{1-\pi
		}{1-\pi^{\ast}}\right)  \left(  \frac{D}{\pi^{\ast}(X)}(Y-\delta_{1}^{\ast
		}(\pi^{\ast}(X)))-\frac{1-D}{1-\pi^{\ast}(X)}(Y-\delta_{0}^{\ast}(\pi^{\ast
		}(X)))\right)  . \label{ATE-PS-beta}%
	\end{equation}
	Lemma \ref{lem:ATE-PS-adjust} shows that the second adjustment is given by
	(\ref{ATE-PS-pi}).
	
	Adding up (\ref{ATE-PS-main}) and (\ref{ATE-PS-beta}), we can see that the
	influence function of the infeasible estimator based on known $\pi^{\ast}(X)$
	is equal to
	\begin{align}
		&  \left(  \left(  D\frac{\pi}{\pi^{\ast}}+(1-D)\frac{1-\pi}{1-\pi^{\ast}%
		}\right)  \delta^{\ast}(\pi^{\ast}(X))-\beta\right) \nonumber\\
		&  +\frac{\pi^{\ast}(X)\frac{\pi}{\pi^{\ast}}+(1-\pi^{\ast}(X))\frac{1-\pi
			}{1-\pi^{\ast}}}{\pi^{\ast}(X)}D\tilde{\varepsilon}_{1}\nonumber\\
		&  -\frac{\pi^{\ast}(X)\frac{\pi}{\pi^{\ast}}+(1-\pi^{\ast}(X))\frac{1-\pi
			}{1-\pi^{\ast}}}{1-\pi^{\ast}(X)}(1-D)\tilde{\varepsilon}_{0}, \label{ATE1}%
	\end{align}
	where $\tilde{\varepsilon}_{d}\equiv Y-\delta_{d}^{\ast}(\pi^{\ast}(X))$.
	Because\ $\tilde{\varepsilon}_{d}=\varepsilon_{d}+\beta_{d}(X)-\delta
	_{d}^{\ast}(\pi^{\ast}(X))$, the sum of (\ref{ATE1}) and (\ref{ATE-PS-pi}) is
	\begin{align*}
		&  \left(  \left(  D\frac{\pi}{\pi^{\ast}}+(1-D)\frac{1-\pi}{1-\pi^{\ast}%
		}\right)  \beta(X)-\beta\right) \\
		&  +\frac{\pi^{\ast}(X)\frac{\pi}{\pi^{\ast}}+(1-\pi^{\ast}(X))\frac{1-\pi
			}{1-\pi^{\ast}}}{\pi^{\ast}(X)}D\varepsilon_{1}\\
		&  -\frac{\pi^{\ast}(X)\frac{\pi}{\pi^{\ast}}+(1-\pi^{\ast}(X))\frac{1-\pi
			}{1-\pi^{\ast}}}{1-\pi^{\ast}(X)}(1-D)\varepsilon_{0}.
	\end{align*}
	This is identical to the influence function (\ref{ATE-infl}) of the ATE
	estimator that conditions on $X$.
	
	If $\pi^{\ast}$ is estimated, the influence function needs to reflect it by
	adding
	\begin{align*}
		&  E^{\ast}\left[  \frac{\partial}{\partial\pi^{\ast}}\left(  D\frac{\pi}%
		{\pi^{\ast}}+(1-D)\frac{1-\pi}{1-\pi^{\ast}}\right)  (\delta_{1}^{\ast}%
		(\pi^{\ast}(X))-\delta_{0}^{\ast}(\pi^{\ast}(X)))\right]  (D-\pi^{\ast})\\
		&  =E^{\ast}\left[  \left(  -\pi^{\ast}(X)\frac{\pi}{(\pi^{\ast})^{2}}%
		+(1-\pi^{\ast}(X))\frac{1-\pi}{(1-\pi^{\ast})^{2}}\right)  (\delta_{1}^{\ast
		}(\pi^{\ast}(X))-\delta_{0}^{\ast}(\pi^{\ast}(X)))\right]  (D-\pi^{\ast}),
	\end{align*}
	which is identical to (\ref{matchX-influence-estimatedpi*}).
\end{proof}

\begin{proof}
	[Proof of Proposition \ref{prop:ATT}]Now we derive the asymptotic variance of
	the ATT estimator in (\ref{ATT-estimator}), where $\hat{\beta}_{0}(X)$ is a
	nonparametric regression of $Y$ on $X$ in the control group ($D=0$). Lemma
	\ref{lem:ATT-adjust} establishes the adjustment for the estimation of
	$\beta_{0}(X)$ in the numerator of (\ref{ATT-estimator}) as
	\[
	-\frac{\pi^{\ast}(X)}{1-\pi^{\ast}(X)}(1-D)(Y-\beta_{0}(X)).
	\]
	It follows that the numerator $\frac{1}{n}\sum_{i=1}^{n}D_{i}(Y_{i}-\hat
	{\beta}_{0}(X_{i}))$ has the influence function
	\begin{align}
		&  D(Y-\beta_{0}(X))-E^{\ast}[\pi^{\ast}(X)\beta(X)]-\frac{\pi^{\ast}%
			(X)}{1-\pi^{\ast}(X)}(1-D)(Y-\beta_{0}(X))\nonumber\\
		=  &  D(\beta_{1}(X)+\varepsilon_{1}-\beta_{0}(X))-E^{\ast}[\pi^{\ast}%
		(X)\beta(X)]-\frac{\pi^{\ast}(X)}{1-\pi^{\ast}(X)}(1-D)\varepsilon
		_{0}\nonumber\\
		=  &  D\beta(X)-E^{\ast}[\pi^{\ast}(X)\beta(X)]+D\varepsilon_{1}-\frac
		{\pi^{\ast}(X)}{1-\pi^{\ast}(X)}(1-D)\varepsilon_{0}.
		\label{ATT-num-influence}%
	\end{align}
	In other words,
	\begin{align*}
		&  \frac{1}{n}\sum_{i=1}^{n}(D_{i}(Y_{i}-\hat{\beta}_{0}(X_{i}))-E^{\ast}%
		[\pi^{\ast}(X)\beta(X)])\\
		=  &  \frac{1}{n}\sum_{i=1}^{n}\left(  D_{i}\beta(X_{i})-E^{\ast}[\pi^{\ast
		}(X)\beta(X)]+D_{i}\varepsilon_{1i}-\frac{\pi(X_{i})}{1-\pi(X_{i})}%
		(1-D_{i})\varepsilon_{0i}\right)  +o_{p}\left(  \frac{1}{\sqrt{n}}\right)  .
	\end{align*}
	By the delta method, we obtain
	\begin{align*}
		&  \sqrt{n}\left(  \frac{\frac{1}{n}\sum_{i=1}^{n}D_{i}(Y_{i}-\hat{\beta}%
			_{0}(X_{i}))}{\frac{1}{n}\sum_{i=1}^{n}D_{i}}-\frac{E^{\ast}[\pi^{\ast
			}(X)\beta(X)]}{\pi^{\ast}}\right) \\
		=  &  \sqrt{n}\left(  \frac{\frac{1}{n}\sum_{i=1}^{n}\left(  D_{i}\beta
			(X_{i})-E^{\ast}[\pi^{\ast}(X)\beta(X)]+D_{i}\varepsilon_{1i}-\frac{\pi^{\ast
				}(X_{i})}{1-\pi^{\ast}(X_{i})}(1-D_{i})\varepsilon_{0i}\right)  }{\pi^{\ast}%
		}\right) \\
		&  -\frac{E^{\ast}[\pi^{\ast}(X)\beta(X)]}{(\pi^{\ast})^{2}}\frac{1}{\sqrt{n}%
		}\sum_{i=1}^{n}(D_{i}-\pi^{\ast})+o_{p}(1)\\
		=  &  \frac{1}{\sqrt{n}}\sum_{i=1}^{n}\left(  \frac{D_{i}\beta(X_{i})}%
		{\pi^{\ast}}-\gamma-\frac{\gamma}{\pi^{\ast}}(D_{i}-\pi^{\ast})+\frac
		{D_{i}\varepsilon_{1i}-\frac{\pi^{\ast}(X_{i})}{1-\pi^{\ast}(X_{i})}%
			(1-D_{i})\varepsilon_{0i}}{\pi^{\ast}}\right)  +o_{p}(1)\\
		=  &  \frac{1}{\sqrt{n}}\sum_{i=1}^{n}\left(  \frac{D_{i}(\beta(X_{i}%
			)-\gamma)}{\pi^{\ast}}+\frac{D_{i}\varepsilon_{1i}-\frac{\pi^{\ast}(X_{i}%
				)}{1-\pi^{\ast}(X_{i})}(1-D_{i})\varepsilon_{0i}}{\pi^{\ast}}\right)
		+o_{p}(1),
	\end{align*}
	where $\gamma=E[\pi(X)\beta(X)]/\pi=E^{\ast}[\pi^{\ast}(X)\beta(X)]/\pi^{\ast
	}$ is the ATT. This implies that the asymptotic variance of $\sqrt{n}%
	(\hat{\gamma}-\gamma)$ is equal to (\ref{ATT-var}).
\end{proof}

\begin{proof}
	[Proof of Proposition \ref{prop:ATT-PS}]We now consider the ATT estimator in
	(\ref{ATT-PS-estimator}). Lemma \ref{lem:ATT-PS-adjust} shows that the
	adjustment for the estimation of $\pi^{\ast}(X)$ in the numerator of
	(\ref{ATT-PS-estimator}) is equal to
	\begin{equation}
		-\frac{1}{1-\pi^{\ast}(X)}(\beta_{0}(X)-\delta_{0}^{\ast}(\pi^{\ast
		}(X)))(D-\pi^{\ast}(X)). \label{ATT-PS-pi}%
	\end{equation}
	Using a similar line of argument leading to (\ref{ATT-num-influence}), we can
	show that the infeasible estimator $\frac{1}{n}\sum_{i=1}^{n}D_{i}(Y_{i}%
	-\hat{\delta}_{0}(\pi^{\ast}(X_{i})))$ with known $\pi^{\ast}(X)$ has the
	influence function
	\begin{align}
		&  D(Y-\delta_{0}^{\ast}(\pi^{\ast}(X)))-E^{\ast}[\pi^{\ast}(X)\delta^{\ast
		}(\pi^{\ast}(X))]-\frac{\pi^{\ast}(X)}{1-\pi^{\ast}(X)}(1-D)(Y-\delta
		_{0}^{\ast}(\pi^{\ast}(X)))\nonumber\\
		=  &  D\delta^{\ast}(\pi^{\ast}(X))-E^{\ast}[\pi^{\ast}(X)\delta^{\ast}%
		(\pi^{\ast}(X))]+D\tilde{\varepsilon}_{1}-\frac{\pi^{\ast}(X)}{1-\pi^{\ast
			}(X)}(1-D)\tilde{\varepsilon}_{0}, \label{ATT-PS1}%
	\end{align}
	where $\tilde{\varepsilon}_{d}=Y-\delta_{d}^{\ast}(\pi^{\ast}(X))$. Rewrite
	the adjustment in (\ref{ATT-PS-pi}) as
	\begin{align}
		&  -\left(  D-\frac{\pi^{\ast}(X)(1-D)}{1-\pi^{\ast}(X)}\right)  (\beta
		_{0}(X)-\delta_{0}^{\ast}(\pi^{\ast}(X)))\nonumber\\
		=  &  D(\beta(X)-\delta^{\ast}(\pi^{\ast}(X)))-D(\beta_{1}(X)-\delta_{1}%
		^{\ast}(\pi^{\ast}(X)))\nonumber\\
		&  +\frac{\pi^{\ast}(X)}{1-\pi^{\ast}(X)}(1-D)(\beta_{0}(X)-\delta_{0}^{\ast
		}(\pi^{\ast}(X))). \label{ATT-PS2}%
	\end{align}
	Because $\tilde{\varepsilon}_{d}=\varepsilon_{d}+\beta_{d}(X)-\delta_{d}%
	^{\ast}(\pi^{\ast}(X))$, summing up (\ref{ATT-PS1}) and (\ref{ATT-PS2}) we
	obtain the overall influence function of the numerator in
	(\ref{ATT-PS-estimator}) as
	\[
	D\beta(X)-E^{\ast}[\pi^{\ast}(X)\delta^{\ast}(\pi^{\ast}(X))]+D\varepsilon
	_{1}-\frac{\pi^{\ast}(X)}{1-\pi^{\ast}(X)}(1-D)\varepsilon_{0},
	\]
	which is identical to the influence function (\ref{ATT-num-influence}) of the
	numerator in (\ref{ATT-estimator}).
\end{proof}

\begin{proof}
	[Proof of Proposition \ref{prop:weighting}]Because the conditional
	distribution of $(Y,Z,D)$ is the same in the population and stratified sample,
	we have $E[\varphi(Y,Z,D)|D]=E^{\ast}[\varphi(Y,Z,D)|D]$. By iterated
	expectations
	\begin{align*}
		&  E^{\ast}\left[  \left(  D\frac{\pi}{\pi^{\ast}}+(1-D)\frac{1-\pi}%
		{1-\pi^{\ast}}\right)  \varphi(Y,Z,D)\right] \\
		=  &  E^{\ast}\left[  \left.  \left(  D\frac{\pi}{\pi^{\ast}}+(1-D)\frac
		{1-\pi}{1-\pi^{\ast}}\right)  \varphi(Y,Z,D)\right\vert D=1\right]  \pi^{\ast
		}\\
		&  +E^{\ast}\left[  \left.  \left(  D\frac{\pi}{\pi^{\ast}}+(1-D)\frac{1-\pi
		}{1-\pi^{\ast}}\right)  \varphi(Y,Z,D)\right\vert D=0\right]  (1-\pi^{\ast})\\
		=  &  E^{\ast}[\varphi(Y,Z,D)|D=1]\pi+E^{\ast}[\varphi(Y,Z,D)|D=0](1-\pi)\\
		=  &  E[\varphi(Y,Z,D)|D=1]\pi+E[\varphi(Y,Z,D)|D=0](1-\pi)\\
		=  &  E[\varphi(Y,Z,D)].
	\end{align*}
	
\end{proof}


\end{document}